
%
%
%
\documentclass[prodmode,acmtecs]{acmsmall} 
\pdfoutput=1
\usepackage{epsfig}
\usepackage{color}
\usepackage[dvipsnames]{xcolor}
\usepackage{subfigure}
\usepackage{amsmath}
\usepackage[linesnumbered,ruled,vlined]{algorithm2e}

\SetAlFnt{\small}
\SetAlCapFnt{\small}
\SetAlCapNameFnt{\small}
\SetAlCapHSkip{0pt}
\IncMargin{-\parindent}

\usepackage[width=.75\textwidth]{caption}
\usepackage{lineno}

\acmVolume{x}
\acmNumber{x}
\acmArticle{x}

\renewcommand{\ae}[1]{{\color{black}{#1}}}

\newcommand{\re}[1]{{\color{black}{#1}}}

\setcopyright{rightsretained}

\doi{0000001.0000001}
\issn{1234-56789}

\begin{document}
\setcounter{page}{1}

\markboth{X. Gu et al.}{Efficient Schedulability Test for Dynamic-Priority Scheduling of MC Real-Time Systems}

\title{Efficient Schedulability Test for Dynamic-Priority Scheduling of Mixed-Criticality Real-Time Systems}
\author{Xiaozhe Gu
\affil{Nanyang Technological University}
Arvind Easwaran
\affil{Nanyang Technological University}
}

\begin{abstract}
Systems in many safety-critical application domains are subject to certification requirements. In such a system, there are typically different applications providing functionalities that have varying degrees of criticality. Consequently, the certification requirements for functionalities at these different criticality levels are also varying, with very high levels of assurance required for a highly critical functionality, whereas relatively low levels of assurance required for a less critical functionality. Considering the timing assurance given to various applications in the form of guaranteed budgets within deadlines, a theory of real-time scheduling for such multi-criticality systems has been under development in the recent past. In particular, an algorithm called Earliest Deadline First with Virtual Deadlines (EDF-VD) has shown a lot of promise for systems with two criticality levels, especially in terms of practical performance demonstrated through experiment results. In this paper we design a new schedulability test for EDF-VD that extend these performance benefits to multi-criticality systems. We propose a new test based on demand bound functions and also present a novel virtual deadline assignment strategy. Through extensive experiments we show that the proposed technique significantly outperforms existing strategies for a variety of generic real-time systems.

\end{abstract}

\category{C.3}{Special-purpose and application-based systems}{Real-time and embedded systems}\category{D.4.1}{Operating Systems}{Process management-Scheduling}

\terms{Schedulability Analysis, Multi-Criticality System, Design of Real-Time Scheduler}





\maketitle

\section{Introduction}
\label{sec:intro}
Real-time systems are defined as those systems in which the correctness of the system depends not only on the logical result of computation, but also on the time at which the results are produced~\cite{stankovic1998introduction}.  For example  a pacemaker is inserted in a person's chest to provide electrical impulses at regular intervals to help the heart beat. Here the pacemaker must provide service with certain timing constraints, and  applications with these kinds of timing constraints are considered real time.

Timing constraints in real-time systems are often modeled as deadlines. If a schedulable activity (e.g., a piece of job) executes and completes before its assigned deadline, the deadline is met (and otherwise it is missed). This means, in order to meet the deadline, the scheduler must have apriori knowledge on the amount of execution that the job would request. Further, to achieve high timing predictability in the presence of various sources of variability, these systems must be built under pessimistic assumptions to cope with worst case scenarios. That is, the scheduler typically assumes that the job would execute for a certain worst-case amount of time (denoted as WCET for Worst-Case Execution Time), which encompasses all the possible variations in execution time.   However determining an exact  WCET for a job is very difficult~\cite{Puschner:2000}, and usually a conservative overestimation of the true WCET~\cite{Wilhelm:2008} is used to analyze and schedule a real-time system.

An increasing trend in embedded systems that multiple functionalities with different levels of ``criticality" (or importance) are developed independently and integrated together on a common computing platform~\cite{prisaznuk1992integrated}. This trend is evident in industry-driven initiatives such as ARINC653~\cite{ARINC653}, Integrated Modular Avionics (IMA) in avionics and AUTOSAR in automotive. An important notion behind this trend is the safe isolation of separate functionalities of different criticality, primarily to achieve fault containment. 
For example in a modern car, devices and software are integrated into the entertainment system and often run on the same platform as the more critical instrument panel information display subsystem. The challenge in such a system is managing the dramatically-different nature of resource requirements between the  less critical infotainment functions characterized by best-effort or soft real-time needs and the more critical display functions that require strong reliability.

Here we define criticality as the level of assurance against severe failure needed for a system component. Typically, failure of high critical functionality   would cause a more severe consequence to the whole system than the failure of a low critical functionality.
Thus high critical functions require higher assurance that their estimated WCETs will not be exceeded. As a result, their  WCET estimates tend to involve  very conservative assumptions about the system (cache flushing on preemption, over-provisioning for potentially missed execution paths, etc.) that are very unlikely to occur in practice. Consequently, the system resources are in fact severely under-utilized in practice because high critical functions would rarely execute as much as their WCET estimates.

This raises the challenge of how to balance the conflicting requirements of isolation for safety assurance and efficient resource sharing for economical benefits. The concept of \textit{mixed-criticality} (MC) appears to be important in meeting those two goals.  In order to close such a gap in resource utilization, Vestal~\cite{Ves07} proposed the MC task model that comprises of different WCET estimates. These estimates  are determined at different levels of confidence (``criticality'') based on the following principle. A reasonable low-confidence WCET estimate, even if it is based on measurements, may be sufficient for almost all possible execution scenarios in practice.   As long as this estimate is not violated, both low critical tasks and high critical tasks are required to meet their timing constraints.  In the highly unlikely event that this estimate is violated, as long as the scheduling mechanism can ensure deadline satisfaction for high critical applications, the resulting system design may still be considered as safe. Considering such MC real-time task systems with two criticality levels, several studies have proposed scheduling algorithms and corresponding schedulability tests in the past~\cite{EkYi12,BBA12,BBD11}. There have also been some recent studies that extend some of these results to more than two criticality levels~\cite{EkYi14,fleming2013extending,BBD15}.

In this paper we focus on the problem of EDF (earliest deadline first) scheduling of mixed-criticality systems on uniprocessors. In particular, we address the problem of scheduling multi-criticality real-time task systems (systems with more than two criticality levels). Baruah and Vestal~\cite{BaVe08} first considered MC scheduling with EDF. Later Park and Kim~\cite{park2011dynamic} proposed Criticality Based EDF that applies a combination of off- and on-line analyses to run high critical jobs as late as possible, and low critical jobs in the generated slack.    Baruah et al. proposed an algorithm called EDF-VD (EDF - with virtual deadlines)~\cite{BBA12} for a dual-criticality system. High critical tasks have their deadlines reduced  by the same factor (if necessary) during low-criticality mode execution.  They demonstrate both theoretically and via evaluations that this is an effective scheme.   EDF-VD~\cite{BBA12} is constrained to dual-criticality implicit deadline  systems, and was extended  to multi-criticality  systems in  \cite{BBD15}.   Although EDF-VD analyses the system across multiple criticality levels together, it is still very pessimistic even for two criticality levels because of the following factors: 1) The virtual deadlines for all the high critical tasks are uniformly assigned based on a single common factor, and 2) Using demand density to characterize the demand of a constrained-deadline task (task with a virtual deadline smaller than period) will always be pessimistic.

A more general demand bound function based analysis for EDF mixed-criticality scheduling was proposed by Ekberg and Yi~\cite{EkYi12}. They also introduced a heuristic virtual deadline tuning algorithm called GreedyTuning  where deadlines can be reduced by different factors. GreedyTuning can increase the chances that a task system is schedulable by EDF. From extensive experiments, they show that GreedyTuning outperforms all existing works on MC scheduling for a variety of generic real-time systems.  GreedyTuning was extended to multi-criticality systems in~\cite{EkYi14}. However it suffers from a drawback that its schedulability performance  drops significantly as the number of criticality levels increases~(see Figure 17 in~\cite{EkYi14}).  In Figure~\ref{fig:L2s} of this paper,  we also show that it is not good at scheduling task systems with a larger percentage of high critical tasks. The primary reason for this drop in performance is that they analyze the system in each criticality mode from the time instant when the mode-switch happens, conservatively assuming maximum carry-over interference when the system behavior switches from a lower criticality level to the one being analyzed. 

The test proposed in \cite{Eas13} addresses this problem by considering the system behavior from the start of a busy interval in a dbf-based analysis, but it is restricted to dual-criticality systems.    The first challenge in extending this initial result to more than two criticality levels is that the task execution pattern that can result in worst-case demand and hence the dbf for dual-criticality system is no longer valid in multi-criticality systems.   Besides, if we consider the demand from the start of a busy interval as \cite{Eas13} does, there would be  multiple mode-switches happening at $S_1,~S_2,\ldots,~S_m$  during the time interval. Then, we have to consider  all possible combinations of these mode-switch instants, and as a result,  the complexity of the test is exponential in the number of criticality levels. Also, given a set of these mode-switch instants, the task execution pattern that will result in the worst-case demand depends on all the tasks in the system and their remaining execution time at each of those mode-switches.  It is therefore non-trivial to determine a worst-case pattern with low pessimism, and there is no known technique for the same.

\textbf{Contribution:} In sum the contributions of this work can be summarized as follows.
\begin{itemize}
	\item In this paper, we overcome the challenges and extend the dual-criticality dbf-based test~\cite{Eas13} for multi-criticality systems that have the same time complexity as the dual-criticality ones. 
	\item  To further improve the performance of the proposed design,  we also
		develop a new virtual deadline assignment strategy, extending the strategy proposed by ~\cite{EkYi14}.
	Finally, through experimental evaluation, we demonstrate that the proposed technique significantly outperforms the existing ones for generic real-time MC task systems.
\end{itemize}

\textbf{Other Related Work}: Since Vestal~\cite{Ves07} first proposed the MC task model and an algorithm based on Audsley's priority assignment strategy~\cite{Aud91}, there have been a series of publications on the scheduling of MC systems on uniprocessors. A number of proposed studies are restricted to address the problem of scheduling a finite set of mixed-criticality jobs with criticality dependent execution times~\cite{LiBa10a,BLS10a,BBD12}. The model these studies use is a constrained one, because in many real-time systems each task is able to generate an infinite number of jobs.  For example, the engine control unit in a car periodically senses and processes information to efficiently control the fuel injection and emissions.  Hence, these studies are superseded by  studies that are applicable to the more general sporadic MC task model, which is also the focus of this paper.  Vestal's approach~\cite{Ves07} is the first work that uses  Response-Time Analysis (RTA) to analyze the schedulability of generic MC task systems. This work was later improved by  the  Static Mixed Criticality Scheme (SMC)~\cite{BaBu11}. Adaptive scheme (AMC)~\cite{BBD11}  goes further and  it outperforms all the previous works on fixed priority MC scheduling in terms of schedulability. Fleming and Burns~\cite{fleming2013extending}  extended AMC for task systems  with an arbitrary number of criticality levels, focusing particularly on five levels as this is the maximum found in automotive and avionics standards. There are various works~ (e.g.~\cite{Su13,BFB15,bate2015bailout}) concerned about addressing other problems about MC scheduling, e.g., how to switch back to low critical mode or support low critical execution, but is not the focus of this work.

The rest of the paper is organized as follows. In Section~\ref{sec:model}  we  first introduce the mixed-criticality task and scheduling model, and in Section~\ref{sec:ExistDBF} we  give a brief introduction about  GreedyTuning~\cite{EkYi14}, which is also the work we aim to improve.  We derive our new multi-mode demand bound function (dbf) in Section~\ref{sec:GIT}. In Section~\ref{sec:T1} we present a novel deadline tuning algorithm to improve the performance of the proposed test. Finally in Section~\ref{sec:eva} we show that our proposed test  dominates GreedyTuning~\cite{EkYi14}  from experimentation.

\section{Task and Scheduling Model}
\label{sec:model}
\subsection{MC Task Model}
The  sporadic task model~\cite{Mok1983} is a generic model for capturing the
real-time requirements of many event-driven systems including those with MC such as avionics and automotive. A  sporadic task~\cite{Mok1983} can be specified as $\tau_i=(T_i,C_i,D_i)$, where $T_i$ denotes minimum separation between successive job releases, $D_i$ denotes  its relative deadline, and $C_i$ denotes its worst-case execution time (WCET). No job of $\tau_i$ is expected to execute for more  time than its WCET, and otherwise the system is regarded as exhibiting erroneous behavior. Any job  released by $\tau_i$ is required to complete by its deadline, and deadline miss is regarded as system failure.

The task model widely used in most previous studies on  scheduling of MC systems (e.g.,~\cite{EkYi12,BBD11,Ves07,Eas13}) is a straightforward  extension of the classic sporadic task model~\cite{Mok1983} to a MC setting; the worst-case execution times of a single task can vary between criticality levels. However in most of these works, the task model  is constrained to two levels, i.e, a task can either be a low critical task or a high critical task.  Instead, we use a more general multi-criticality model in this paper.
Formally, a task $\tau_i$ in a MC sporadic task set  $\tau=\{\tau_1, \tau_2,\ldots,$ $\tau_k\}$  can be represented as a tuple $(T_i,C_i,D_i,L_i)$,  where:
\begin{itemize}
\item { $T_i \in \mathbb Z^+$ is the minimal time separation between the release of two successive jobs\footnote{We focus on integer release time model and hence $T_i \in \mathbb Z^+$.},}
\item {$L_i  \in \mathbb Z^+$ is the criticality level of  $\tau_i$, and $L_i=1$ denotes the lowest criticality level, }
\item {$D_i\in \mathbb Z^+$ is the relative deadline},
\item {$C_i=(C_i^1,~C_i^2,\ldots,C_i^{L_i})$ is a $L_i$-tuple of estimated execution time budgets respectively for each criticality level. }
\end{itemize}

Since the worst-case execution times for higher criticality levels are estimated more conservatively, we make the  standard assumption that
\[
\forall \tau_i\in \tau:~ C_i^1 \leq C_i^2 \ldots \leq C_i^{L_i} \leq D_i \leq T_i
\]

\subsection{Mode Switch}
The system initially starts in $L_1$ (short for level one) criticality mode (i.e., the lowest criticality mode), each task $\tau_i\in \tau$ releases a  potentially infinite sequence of jobs $<J_i^1,J_i^2,\ldots>$ in the standard manner: if $r(J),~d(J)\in \mathbb R $ denote the release time and absolute deadline of job $J$, then 
\begin{itemize}
\item $r(J_i^{k+1})\geq r(J_i^k)+T_i$
\item $d(J_i^k)=r(J_i^k)+ D_i$
\end{itemize}
The system stays in  $L_1$ criticality mode as long as all the jobs of every  task $\tau_i$ with $L_i\geq 2$ do not execute beyond their $L_1$   execution time estimate $C_i^1$.  Once a  job executes for its entire $L_1$ execution time estimate $C_i^1$ without signaling that it has finished, the system immediately switches to $L_2$ criticality mode. That is, in general as shown in Figure~\ref{FigureModeInstant}, when the  first job $J_i$ such that $L_i \geq m$  executes for more than  its $L_{m-1}$  execution time estimate $C_i^{m-1}$ but does not signal that it has finished, the system switches to $L_{m}$ criticality mode.  The mode switch time instant when the system switches from $L_{m-1}$ to $L_m$ criticality mode is called \emph{criticality mode switch instant} and is denoted as $S_{m}$. 
\begin{figure}[tbh]
\begin{center}
\noindent
  \includegraphics[width=0.7\textwidth]{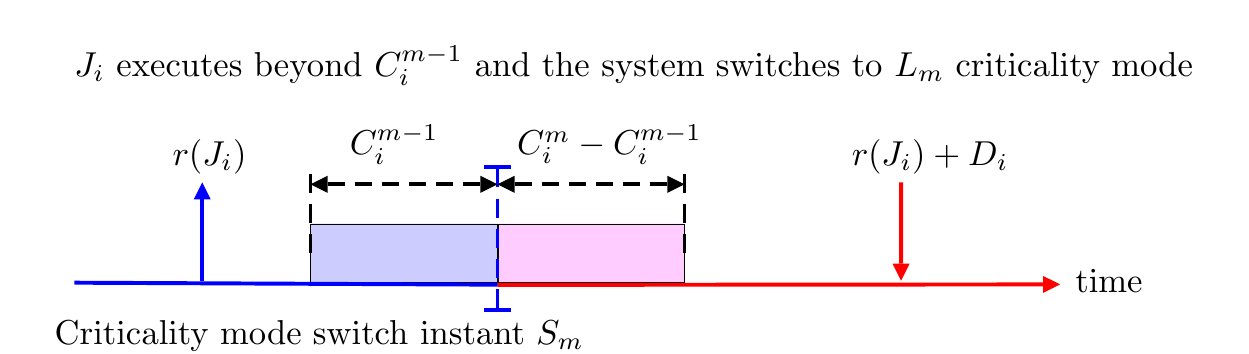}
  \end{center}
    \caption{Criticality Mode Switch Instant $S_m$}\label{FigureModeInstant}
\end{figure}

After this mode switch at $S_m$, jobs with criticality level $L_i<m$, including those that were released before $S_m$ but did not complete until $S_m$,  are no longer required to meet deadlines. {For simplicity of analysis, these jobs are assumed to be dropped thereafter.} However we must still meet all the deadlines for jobs with $L_i\geq m$, even if they require up to $C_i^{m}$ budgets.  If the system is now in $L_m$ criticality mode, we assume it will switch back to $L_{1}$  mode whenever the processor is idle. 

Note that the assumptions in the preceding paragraph on mode switch are consistent with the standard literature on MC scheduling (e.g.,~\cite{LiBa10a,BBD11,Ves07,Eas13}). There are some  studies that focus on dealing with those assumptions, such as reducing the penalty on low-critical tasks (e.g.,~\cite{PCH14,PCH13,GU15}), and switching back to a lower criticality mode earlier than the processor idle time (e.g.,\cite{DBLP:conf/ecrts/BateBD15}).  These studies are orthogonal to the focus of this paper, which is to derive efficient schedulability tests for EDF-scheduled MC task systems. 

\begin{definition}[MC-Schedulable]
We define a MC task system to be MC-schedulable if $\forall m: 1 \leq m\leq M$ where $M=\max_{\tau_i\in \tau}\{L_i\}$, all jobs with criticality level $L_i\geq m$ can receive a budget up to $C_i^{m}$ and signal completion between their release time and deadline while the  system stays in $L_{m}$ criticality mode.
\end{definition}

\subsection{Demand Bound Function Based Schedulability Analysis}
Demand bound function  was first proposed to analyze the schedulability of non-MC real-time workloads~\cite{BMR90}.  The demand bound function captures the maximum execution demand a task can generate for a given time interval length. 

\begin{definition}[Demand bound function]
A demand bound function $dbf(\tau_i,\mathit e)$ gives an upper bound on the maximum possible execution demand of task $\tau_i$ in any time interval of length $\mathit e$, where demand is calculated as the total amount of required execution time of jobs with their whole scheduling windows within the time interval.
\end{definition}
For example  a task $\tau_i=(T_i=5,C_i=2,D_i=3,L_i=1)$ can generate as much  as $2\times C_i=2\times 2=4$   time units execution demand for a time interval length equal to $8$.
For non-MC constrained deadline task model, the demand bound function for a given time interval length $\mathit e$  can be computed in constant time~\cite{BMR90} using the following equation.
\begin{equation}
dbf(\tau_i,\mathit e)=\left(\left\lfloor  \frac{ \mathit e -D_i }{T_i} \right\rfloor +1\right)\times C_i
\end{equation}

As long as we can guarantee that the total execution demand of a task set $\tau$ is always smaller than or equal to the time interval length $\mathit e$ for all values of $\mathit e$, we can claim $\tau$ is schedulable by the EDF algorithm on a uniprocessor platform. 
\begin{theorem}[\cite{BMR90}]
\label{theorem:test_non}
A non-mixed-criticality sporadic task set $\tau$ is successfully scheduled by the earliest deadline first (EDF) algorithm on a dedicated unit speed uniprocessor platform if
\[
		\forall \mathit e \in\{1,2,\ldots,\mathit e^{max}\}:~\sum_{\tau_i\in\tau}dbf(\tau_i,\mathit e)\leq  \mathit e
\]
where $\mathit e^{max}$  is a pseudo-polynomial in the size of the input~\cite{BMR90} as long as the system utilization is bounded by some constant smaller than 1.
\end{theorem}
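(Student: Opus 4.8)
The plan is to establish the contrapositive: assuming EDF fails to schedule $\tau$ on the unit-speed uniprocessor, I will exhibit a positive integer $e$ with $\sum_{\tau_i\in\tau} dbf(\tau_i,e) > e$ that also satisfies $e \le e^{max}$, contradicting the hypothesis. The engine of the argument is the classical \emph{busy-interval} technique of Baruah, Mok and Rosier. Suppose the earliest deadline miss in the EDF schedule occurs at time $t_d$, and let $t_0 < t_d$ be the latest instant such that, immediately before $t_0$, the processor is either idle or executing a job whose absolute deadline exceeds $t_d$; if no such instant exists, set $t_0 = 0$, the start of the schedule. Since release times and deadlines are integers, both $t_0$ and $t_d$ are integers, so $e := t_d - t_0 \in \mathbb{Z}^+$.

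First I would prove the structural claim that throughout $[t_0, t_d)$ the processor is continuously busy and executes \emph{only} jobs whose entire scheduling windows lie in $[t_0, t_d]$ — that is, jobs with absolute deadline at most $t_d$ and release time at least $t_0$. That the processor never idles and never runs a job with deadline $> t_d$ inside $(t_0, t_d)$ follows from the maximality of $t_0$, as any such instant would itself be a later admissible choice for $t_0$. That every job executing in $[t_0, t_d)$ is released no earlier than $t_0$ follows because such a job, were it released before $t_0$, would still be incomplete just before $t_0$ (otherwise it could not execute afterwards) and, having deadline $\le t_d$, would be preferred by EDF over idling and over any later-deadline job — again contradicting the choice of $t_0$. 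This is the step I expect to require the most care.

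Next I would count execution demand over $[t_0, t_d)$. On one hand, the processor is busy for the entire duration $e = t_d - t_0$ working exclusively on jobs with scheduling window inside $[t_0, t_d]$, yet a job with deadline $t_d$ — whose release likewise lies in $[t_0, t_d]$ — is left incomplete at $t_d$; hence the total execution requirement of all jobs with scheduling window inside $[t_0, t_d]$ strictly exceeds $e$. On the other hand, by the structural claim every such job belongs to some $\tau_i$ and has its whole scheduling window inside an interval of length $e$, so the total contribution of $\tau_i$ is at most the largest demand a sporadic task can place in a window of length $e$, namely $dbf(\tau_i, e) = \big(\lfloor (e - D_i)/T_i \rfloor + 1\big) C_i$ (with the convention that this equals $0$ when $e < D_i$). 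Summing over $\tau_i$ gives $\sum_{\tau_i\in\tau} dbf(\tau_i, e) \ge (\text{total demand in }[t_0,t_d]) > e$, the desired contradiction — provided $e \le e^{max}$.

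To close the gap, I would bound how large $e$ can be. From the closed form, $dbf(\tau_i, e) \le (C_i/T_i)\,e + C_i = U_i e + C_i$, hence $\sum_{\tau_i\in\tau} dbf(\tau_i, e) \le U e + \sum_{\tau_i\in\tau} C_i$ with $U = \sum_i U_i$. If $U \le c$ for a constant $c < 1$, this right-hand side is at most $e$ as soon as $e \ge \big(\sum_i C_i\big)/(1 - U)$; so any $e$ that witnesses a violation is strictly smaller than $\big(\sum_i C_i\big)/(1-c)$, and taking $e^{max} := \lceil \big(\sum_i C_i\big)/(1-c) \rceil$ both makes $e^{max}$ pseudo-polynomial in the task parameters and guarantees $e \le e^{max}$. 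Besides the busy-interval structural claim, the only other point needing attention is handling the $dbf$ boundary case ($e < D_i$) correctly, so that the per-task bound is genuinely an upper bound and the counting inequality remains valid.
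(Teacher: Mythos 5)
Your proof is correct and is precisely the classical processor-demand/busy-interval argument of Baruah, Mok and Rosier that the paper invokes by citation for this theorem (the paper itself gives no proof, deferring to \cite{BMR90}). Both the structural claim about the busy interval $[t_0,t_d)$ and the pseudo-polynomial bound $e^{max}=\lceil\sum_i C_i/(1-c)\rceil$ match the standard treatment, so there is nothing to add.
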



\section{Background: Existing DBF based Test for MC Task Systems}
\label{sec:ExistDBF}
In this section, we first extend the idea of demand bound function to the mixed-criticality setting. Then we  introduce the Single-Mode (SM) demand bound function  of mixed-criticality workloads derived in an existing work~\cite{EkYi12}. Test based on this SM  dbf   has been shown to dominate previous studies (e.g., EDF-VD~\cite{BBA12} and AMC-max~\cite{BBD12}) in terms of the ability to schedule MC sporadic task systems.


Let $M=\max_{\tau_i\in \tau}\{L_i\}$ and  $dbf_{SM}(\tau_i,\mathit e,m)$ where $m\in \{ 1,2,\ldots, M \}$ denote the SM demand bound function of $\tau_i$ for the time interval $[S_m,S_m+\mathit e)$, when the system is currently in $L_m$ criticality mode and was in $L_{m-1}$ criticality mode before that. As we can observe, SM dbf test separately analyzes the system in each criticality mode because it only considers the demand during $[S_m,S_m+\mathit e)$ but the system behavior before $S_m$ is totally ignored. Therefore,   Theorem~\ref{theorem:test_non} can be extended in a  straightforward way as follows.
\begin{theorem}[Proposition 3~\cite{EkYi14}]
\label{theorem:theorem0}
A MC task set $\tau$ is schedulable by EDF on a dedicated unit speed uniprocessor platform  for all the criticality modes if the following conditions hold:
\begin{align*}
&\forall~m\in\{1,2,\ldots,M\}:~\forall \mathit{e}\in\{1,2,\ldots,\mathit e^{max}\}:~dbf_{SM}(\tau,\mathit{e},m)\leq \mathit{e}
\end{align*}
where $dbf_{SM}(\tau,\mathit{e},m)=\sum\limits_{L_i\geq m}dbf_{SM}(\tau_i,\mathit{e},m)$, and $\mathit e^{max}$  is a pseudo-polynomial in the size of the input \re{if the utilization of each criticality mode is bounded by some constant smaller than 1}. 
\end{theorem}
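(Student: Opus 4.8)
The plan is to reduce the claim, one criticality level at a time, to the classical non-mixed-criticality test of Theorem~\ref{theorem:test_non}, using a standard busy-interval (demand-versus-supply) argument together with the defining property of $dbf_{SM}$ established in~\cite{EkYi12,EkYi14}. Fixing $m\in\{1,\dots,M\}$, it suffices to show that in any legal execution in which the system is in $L_m$ criticality mode, no job of a task $\tau_i$ with $L_i\ge m$ misses a deadline while the system stays in that mode; conjoining these facts over all $m$ gives MC-schedulability by the definition of MC-Schedulable.

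For a fixed $m$ I would argue by contradiction: assume the hypothesised inequalities hold, yet in some legal scenario a job $J$ of a task with $L_i\ge m$ misses its deadline $t_f=d(J)$ while the system is in $L_m$ mode, and take $t_f$ minimal. Under EDF with the dropping semantics of Section~\ref{sec:model}, once the system is in $L_m$ mode the processor executes only jobs with criticality $\ge m$. Let $t_0$ be the last instant at or before $t_f$ at which the processor is idle or executing a job with absolute deadline $>t_f$; then, exactly as in the proof of Theorem~\ref{theorem:test_non}, throughout $[t_0,t_f)$ the processor works continuously on jobs whose entire scheduling window lies in $[t_0,t_f)$ and whose deadline is $\le t_f$, and since $J$ misses its deadline the total execution demand of these jobs exceeds $\mathit e:=t_f-t_0$. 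It then remains to bound this demand by $dbf_{SM}(\tau,\mathit e,m)$: invoking the construction of $dbf_{SM}$ from~\cite{EkYi12,EkYi14}, one shows that the relevant busy interval may be taken to begin at or after the switch instant $S_m$ (the single job released before $S_m$ whose deadline survives into $L_m$ mode being charged by $dbf_{SM}$ as its carry-over term), and that the combined demand over $[t_0,t_f)$ of the criticality-$\ge m$ jobs, under the $L_m$-mode deadlines and budgets $C_i^m$, is at most $\sum_{L_i\ge m}dbf_{SM}(\tau_i,\mathit e,m)=dbf_{SM}(\tau,\mathit e,m)$. Since $\mathit e\le\mathit e^{max}$ — because the per-mode utilization bound forces the step function $dbf_{SM}(\tau,\cdot,m)$ to lie strictly below $\mathit e$ for all $\mathit e>\mathit e^{max}$, so any violation occurs at an integer length not exceeding $\mathit e^{max}$ — we obtain $dbf_{SM}(\tau,\mathit e,m)>\mathit e$ with $\mathit e\in\{1,\dots,\mathit e^{max}\}$, contradicting the hypothesis. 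Hence no such $J$ exists.

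The step I expect to be the main obstacle is the claim that the busy interval preceding the deadline miss can be anchored at (or after) the mode-switch instant $S_m$, and that the carry-over work entering $L_m$ mode is fully accounted for by $dbf_{SM}(\tau,\mathit e,m)$. Before $S_m$ the scheduler runs EDF under a different (possibly virtual) deadline assignment and may spend supply on jobs that are subsequently dropped, so it is not a priori obvious that the interference entering $L_m$ mode is dominated by the single-mode dbf; the clean route is to appeal directly to how $dbf_{SM}$ was built in~\cite{EkYi12} to over-approximate precisely this worst-case carry-over together with the jobs re-released after $S_m$ — which is exactly where the \emph{single-mode} analysis pays the price for discarding the pre-$S_m$ history. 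The remaining ingredients, namely the busy-interval skeleton and the pseudo-polynomial cutoff $\mathit e^{max}$, are routine adaptations of~\cite{BMR90} and Theorem~\ref{theorem:test_non}.
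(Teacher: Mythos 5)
The paper never proves this statement: it is imported as Proposition~3 of~\cite{EkYi14} and presented only as a ``straightforward extension'' of Theorem~\ref{theorem:test_non}, so there is no in-paper proof to compare against. Your busy-window reduction is the standard and correct route, but as written it has a real gap at exactly the point you flag, and the gap is not closed merely by citing~\cite{EkYi12} more firmly.

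Concretely, your $t_0$ --- the last instant at or before $t_f$ at which the processor is idle or runs a job with absolute deadline $>t_f$ --- may lie before $S_m$. On $[t_0,S_m)$ the classical busy-interval invariant fails twice over: the processor may spend supply on jobs of criticality $<m$ that are subsequently dropped (so not all work performed in $[t_0,t_f)$ is demand counted by $dbf_{SM}(\cdot,\cdot,m)$), and EDF priorities there are determined by the mode-$(m-1)$ virtual deadlines $D_i^{m-1}$ rather than by $d(J,m)$, so ``deadline $>t_f$'' is not even uniformly defined over the interval. The repair is to anchor the window at $\max(t_0,S_m)$ and bound the residual work of jobs pending at $S_m$ by the carry-over lemma of Section~\ref{sec:ExistDBF}; but that lemma holds only under the hypothesis that all lower modes meet their virtual deadlines, and that hypothesis is supplied precisely by the conditions $\mathbf{CN_{m'}^S}$ for $m'<m$. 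Hence the per-mode claims are not independent and cannot simply be ``conjoined over all $m$'': the proof must be an induction on $m$, in which the mode-$m$ busy-window argument invokes the already-established schedulability of modes $1,\dots,m-1$ to justify both that each task contributes at most one carry-over job at $S_m$ with residual demand as bounded by $dbf_{SM}$, and that pre-$S_m$ interference may be discarded. Your sketch names this obstacle but defers it to the construction of $dbf_{SM}$; making the induction explicit, and carrying out the re-anchoring at $S_m$, is what turns the sketch into a proof. The remaining ingredients (the busy-window skeleton and the cutoff $\mathit e^{max}$) are indeed routine adaptations of~\cite{BMR90}.
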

We define condition $\mathbf{CN_m^S}$ as follows
\begin{align*}
 \forall~ \mathit{e}\in\{1,2,\ldots,\mathit e^{max}\}:~\sum_{L_i\geq m}dbf_{SM}(\tau_i,\mathit{e},m)\leq \mathit{e}
\end{align*}


Condition $\mathbf{CN_m^S}$ captures the schedulability of task set $\tau$ for $L_m$ criticality mode on the assumption that $\tau$ is schedulable in $L_{m'}$ ($m'<m$) criticality modes.  
To compute $dbf_{SM}(\tau_i,\mathit{e},m)$ we need to determine the maximum demand that $\tau_i$ can generate in the interval $[S_m,S_m+\mathit e]$. However, the demand of any task $\tau_i$ in  $L_m$ criticality mode depends on the release pattern in all the previous criticality modes. 

In $L_1$ criticality mode, each task $\tau_i$ will behave like a normal non-mixed-criticality task, and all jobs are guaranteed to execute for at most $C_i^1$ time units. Therefore the dbf for non-mixed-criticality tasks  can be directly applied to capture the demand of $\tau_i$ in $L_1$ criticality mode, i.e., 
\begin{equation}
\label{eqn:dbf^0_task_1}
dbf_{SM}(\tau_i,\mathit e,1)=\left(\left\lfloor  \frac{ \mathit e -D_i}{T_i} \right\rfloor +1\right)\times C_i^1
\end{equation}

If $\tau_i$ has $L_i<m$, then  $dbf_{SM}(\tau_i,\mathit{e},m)=0$ because tasks with $L_i<m$ will be discarded  after $S_m$. On the other hand, if $L_i\geq m$, then we need to consider the job that is released before the mode switch instant $S_m$ but has its deadline after $S_m$, because this job can affect  $\tau_i$'s execution demand after $S_m$. We call such jobs as carry-over jobs.
\begin{definition}[Carry-over job]
A job $J_i^m$ from a $L_i\geq m$ criticality task that is active (released before $S_m$ and has deadline after $S_m$) at the time of the switch to $L_m$ criticality mode is called a carry-over job for $L_m$ criticality mode. 
\end{definition}

\subsection{Characterizing the Demand of Carry-Over Jobs}
While we can discard all the active jobs with $L_i~(L_i<m)$,  the remaining execution demand of the carry-over jobs must be completed by their respective deadlines, and hence the demand of carry-over jobs must be accounted for in $dbf_{SM}(\tau_i,\mathit{e},m)$. 
\begin{figure}[tbh]
\begin{center}
\noindent
  \includegraphics[width=0.85\textwidth]{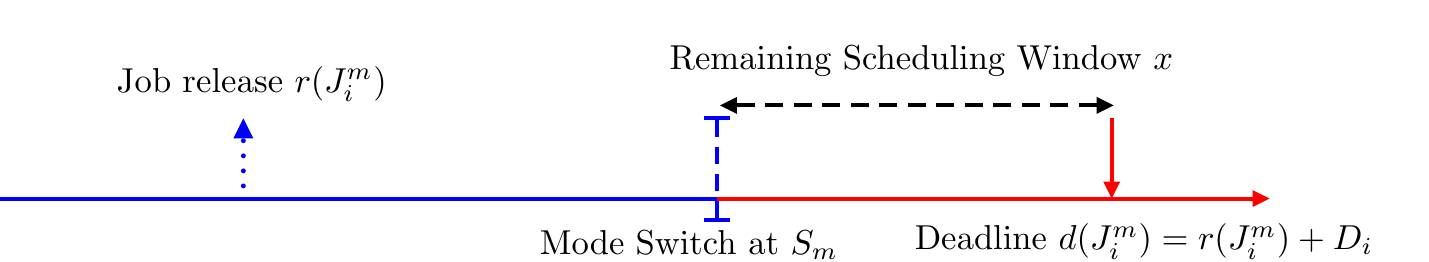}
  \end{center}
\caption{After a switch to higher criticality mode, the remaining execution demand of a carry-over job must be finished in its remaining scheduling window.}
\label{fig:dbfc1}
\end{figure}




At the time of the switch to $L_m$ criticality mode, a carry-over job $J_i^m$ from task $\tau_i$ has $x~(x\geq 0)$ time units left until its deadline as shown in Figure~\ref{fig:dbfc1}.  Since this job would have met its deadline in $L_{m'}$ where $(m'<m)$ criticality mode if the mode-switch at $S_m$ had not happened, there can be at most $x$ time units left to finish its maximum possible remaining execution demand $C_i^{m-1}$.  That is, the job must  have already executed at least $\max(0,C_i^{m-1}-x)$ time units before the mode-switch at $S_m$ (otherwise deadline miss could happen in $L_{m-1}$ criticality mode).  After mode-switch $S_m$, the carry-over job may now execute for up to $C_i^m$ time units in total, and therefore the total execution demand remaining for the carry-over job after the switch is at most $C_i^m-\max(0,C_i^{m-1}-x)$. Unfortunately if $x\to 0$, condition $\mathbf{CN_m^S}$ with $(m>1)$ cannot be satisfied because as long as $C_i^m-C_i^{m-1} >0$, we can always find a small $x$ so that $dbf_{SM}(\tau_i,\mathit e\rightarrow 0,m
)=C_i^m-C_i^{m-1} >0$.

The problem described  above stems from  the fact that EDF may execute a carry-over job  quite late, and hence it can not finish its remaining execution demand after the mode-switch at $S_m$.  To solve this problem, virtual deadlines in different criticality modes have been introduced~\cite{BBA12,EkYi12,Eas13}.  When the system is in $L_m$ criticality mode, tasks $\tau_i$ with $L_i\geq m$ are scheduled by EDF scheduler according to virtual deadline $D_i^m$~$(D_i^m\leq D_i)$.  In $L_m$ criticality mode,  any task $\tau_i$ with $L_i\geq m$ must finish its execution demand $C_i^m$ by its virtual deadline $D_i^m$. Since a job $J_i$ can now have multiple deadlines, we use $d(J_i,m)=r(J_i)+D_i^m$  to denote its absolute virtual deadline for $L_m$ criticality mode.

Virtual deadline enables the carry-over job to have extra slack time, $D_i^m-D_i^{m-1}$, to finish its additional  demand $C_i^m-C_i^{m-1}$ at the cost of a higher load of execution demand in lower criticality mode. However we should note that virtual deadlines are not actual deadlines, and  can be determined by  deadline tuning algorithms~\cite{EkYi12,Eas13} to improve EDF schedulability. The remaining execution demand for a carry-over job can then be bounded with the following lemma.

\begin{lemma}[Demand of carry-over jobs~\cite{EkYi12}]
Assume that EDF uses virtual relative deadline $D_i^m$ in $L_m$ criticality mode  for tasks with $L_i\geq m$, and that we can guarantee that the demand is met in all lower criticality modes (i.e., $L_{m'}$ with $(m'<m)$) with respective virtual deadlines. If the carry-over job of $\tau_i$ has a remaining scheduling window of $x$ time units until it deadline $D_i^m$, as illustrated in Figure~\ref{fig:dbfc2}, then the following hold:
\begin{enumerate}
	\item If $x<D_i^m-D_i^{m-1}$, then the job must has already finished before $S_m$.
	\item If $x\geq D_i^m-D_i^{m-1}$, then the job's remaining execution demand after $S_m$ is bounded by $C_i^m-\max(0,C_i^{m-1}-x+D_i^m-D_i^{m-1})$.  
\end{enumerate}
\end{lemma}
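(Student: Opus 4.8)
The plan is to analyse the carry-over job's execution relative to the mode-switch instant $S_m$, leaning on the assumed schedulability of the lower criticality mode $L_{m-1}$ and on the fact that the schedule produced before $S_m$ depends only on the release pattern up to $S_m$, not on what happens at or after $S_m$. First I would set up notation: let $r$ be the release time of the carry-over job, so its absolute virtual deadlines are $d_{m-1}=r+D_i^{m-1}$ for $L_{m-1}$ mode and $d_m=r+D_i^m$ for $L_m$ mode. Since $D_i^{m-1}\le D_i^m$ (the $L_m$ virtual deadline provides nonnegative extra slack $D_i^m-D_i^{m-1}$ over the $L_{m-1}$ one), we have $d_{m-1}=d_m-(D_i^m-D_i^{m-1})$; by definition $x=d_m-S_m$, hence $d_{m-1}-S_m=x-(D_i^m-D_i^{m-1})$. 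Throughout $[r,S_m)$ the system is in $L_{m-1}$ criticality mode, so every job has executed for at most its $C^{m-1}$ budget and, by hypothesis, EDF meets all $L_{m-1}$ virtual deadlines.

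For part~1, assume $x<D_i^m-D_i^{m-1}$, i.e.\ $d_{m-1}<S_m$. Schedulability in $L_{m-1}$ mode guarantees the carry-over job receives $C_i^{m-1}$ units of service by $d_{m-1}$. If it had not signalled completion by then, it would have executed its entire $C_i^{m-1}$ budget without completing, which by definition forces a switch to $L_m$ no later than $d_{m-1}<S_m$ — contradicting that $S_m$ is the switch instant. Hence the job signals completion strictly before $S_m$, so it is already finished at $S_m$.

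For part~2, assume $x\ge D_i^m-D_i^{m-1}$ and put $y:=d_{m-1}-S_m=x-(D_i^m-D_i^{m-1})\ge 0$. Let $e$ be the amount the carry-over job has executed by $S_m$. To lower-bound $e$, consider the $L_{m-1}$-mode scenario that agrees with the actual run up to $S_m$ and in which the carry-over job has total demand exactly $C_i^{m-1}$ (a legal scenario, since $L_i\ge m>m-1$ makes the job live in $L_{m-1}$ mode and its prefix of execution is unchanged). By the $L_{m-1}$ schedulability assumption this scenario meets the job's deadline $d_{m-1}$; since the schedule before $S_m$ is identical to the actual one, the job has executed $e$ by $S_m$ and has only the interval $[S_m,d_{m-1}]$ of length $y$ left to finish the remaining $C_i^{m-1}-e$, forcing $C_i^{m-1}-e\le y$. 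Together with $e\ge 0$ this gives $e\ge\max(0,C_i^{m-1}-y)$. Over its whole lifetime the job may execute for at most $C_i^m$, so its remaining demand after $S_m$ is at most $C_i^m-e\le C_i^m-\max(0,C_i^{m-1}-y)=C_i^m-\max(0,C_i^{m-1}-x+D_i^m-D_i^{m-1})$.

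I expect the main obstacle to be rigorously justifying the ``$L_{m-1}$-mode scenario'' argument in part~2: that the partial execution $e$ of the carry-over job before $S_m$ is independent of the post-$S_m$ behaviour, and that this $e$ is genuinely constrained by condition $\mathbf{CN_{m-1}^S}$, which quantifies over all admissible release/demand patterns rather than a single trace (so one must check that extending the pre-$S_m$ trace with demand $C_i^{m-1}$ for the carry-over job, and letting the job sitting at its budget at $S_m$ signal completion, yields an admissible $L_{m-1}$ pattern covered by the assumption). Minor care is also needed at the boundary $x=D_i^m-D_i^{m-1}$, where part~2's bound degenerates to the valid-but-loose $C_i^m-C_i^{m-1}$ even though the job has in fact finished, and in fixing the convention that it is the $L_m$ virtual deadline $d_m$ (not the true deadline) that is relevant here, since the quantity being bounded is demand that must be served inside $[S_m,d_m]$.
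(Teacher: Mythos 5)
Your proof is correct and follows essentially the same reasoning the paper gives: the paper states this lemma as a result imported from~\cite{EkYi12} and justifies it only informally in the surrounding text, arguing that since the carry-over job would have met its $L_{m-1}$ virtual deadline $d_{m-1}=d_m-(D_i^m-D_i^{m-1})$ had the switch not occurred, it must already have executed at least $\max(0,C_i^{m-1}-x+D_i^m-D_i^{m-1})$ units before $S_m$ (and must have finished entirely when $d_{m-1}<S_m$). Your write-up merely makes that argument precise, in particular the hypothetical-continuation step and the contradiction with $S_m$ being the first switch instant, so there is no substantive divergence from the paper.
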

As we can observe, to  maximize the total demand in $L_m$ criticality mode, SM  dbf  conservatively assumes maximum possible carry-over demand  $C_i^m-\max(0,C_i^{m-1}-x+D_i^m-D_i^{m-1})$ from  $L_{m-1}$ criticality mode.

\begin{figure}[tbh]
\begin{center}
\noindent
  \includegraphics[width=0.65\textwidth]{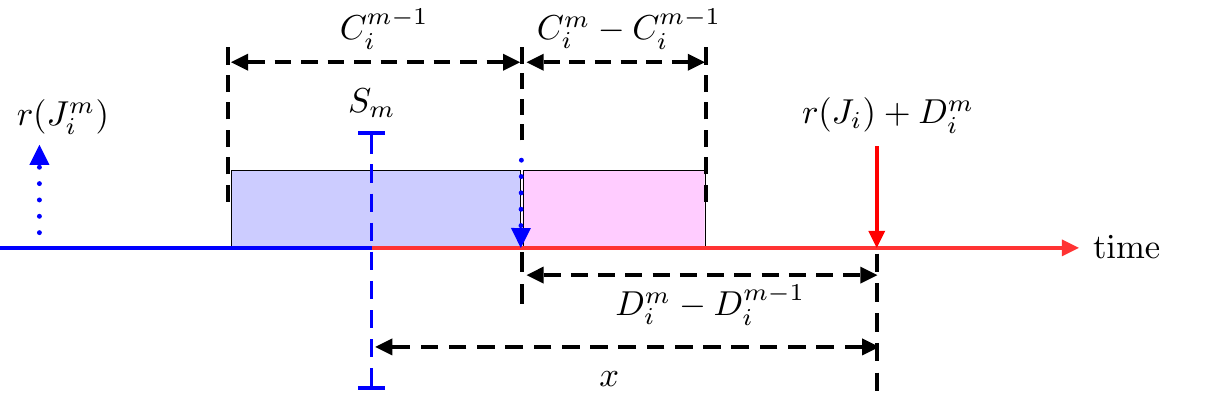}
  \end{center}
    \caption{A carry-over job of $\tau_i$ has a remaining scheduling window of length x after the switch to $L_m$ mode. Here the switch happens before $r(J_i^m)+D_i^{m-1}$.}\label{fig:dbfc2}
\end{figure}

\subsection{ Formulating the SM DBF}

The execution demand of $\tau_i$ for  time interval  $[S_m,S_m+\mathit e)$ is equal to sum of the unfinished execution demand of the carry-over job, and the demand of jobs released after the carry-over job in this interval. 
\begin{lemma}[Maximum demand pattern~\cite{EkYi14}]
\label{lemma:pt1}
Task $\tau_i$ with $L_i\geq m$ can generate maximum execution demand in $L_{m}~(m>1)$ criticality mode for a time interval length $\mathit e$  when the corresponding virtual deadline of some job $r(J_i)+D_i^m$ is at the end of this time interval $\mathit e$ and all preceding jobs are released as late as possible, as shown in Figure~\ref{fig:dbfc3}.
\end{lemma}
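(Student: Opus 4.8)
The plan is to prove this by an exchange (release-shifting) argument that lifts the classical non-MC \emph{dbf} worst-case pattern~\cite{BMR90} to the MC setting, where the first job falling inside the analysis interval may be a carry-over job with reduced remaining demand. Fix any legal release sequence of $\tau_i$ together with a mode-switch instant $S_m$ consistent with it, and write $D$ for the resulting execution demand of $\tau_i$ in $[S_m,S_m+\mathit e)$. By definition $D$ is the sum of (i) the remaining demand of the carry-over job of $\tau_i$, if one exists, which by the Lemma on the demand of carry-over jobs is at most $C_i^m-\max(0,C_i^{m-1}-x+D_i^m-D_i^{m-1})$ where $x$ is its remaining scheduling window, and (ii) a term $C_i^m$ for every job released inside $[S_m,S_m+\mathit e)$ whose virtual deadline $r(J)+D_i^m$ also lies inside $[S_m,S_m+\mathit e)$. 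We then transform this configuration, never decreasing $D$, into the canonical one of Figure~\ref{fig:dbfc3}: some job $J_i$ has $r(J_i)+D_i^m=S_m+\mathit e$ and every two consecutive releases of $\tau_i$ preceding $J_i$ are exactly $T_i$ apart, so in particular the carry-over job, being the last release before $S_m$, lies on this chain.

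First I would anchor the interval's right endpoint. Let $J^{*}$ be the contributing job (possibly the carry-over job) with the latest virtual deadline, and suppose $d(J^{*},m)=S_m+\mathit e-\delta$ with $\delta>0$. Shift $J^{*}$ and every job of $\tau_i$ released no earlier than $r(J^{*})$ rightward by $\delta$. No job counted in (ii) is dropped, since by maximality of $d(J^{*},m)$ none had its deadline in $(d(J^{*},m),S_m+\mathit e]$; the shifted releases, being all $\ge r(J^{*})\ge S_m$ (when $\mathit e\ge D_i^m$ the anchor $J^{*}$ is an ordinary in-interval job; the short-interval case $\mathit e<D_i^m$, where only the carry-over can contribute, is checked directly), stay $\ge S_m$; minimum separation is preserved because the shifted jobs move together and the gap to $J^{*}$'s predecessor only grows; and the carry-over job, lying strictly before $S_m$, is left in place, so term (i) is unchanged. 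Hence $D$ does not decrease and $J^{*}$ is now anchored at $S_m+\mathit e$. This is precisely the point at which the naive ``translate the whole pattern rigidly'' step of the non-MC proof would break, since it could drag the carry-over job past $S_m$; anchoring only the right tail avoids that.

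Next I would compress the preceding releases. Scanning the jobs released before $r(J^{*})$ from latest to earliest, whenever a job $J$ lies more than $T_i$ before its successor, shift $J$ together with all jobs released before it rightward just enough to make that gap exactly $T_i$. Each such move keeps the sequence legal (it only enlarges gaps further to the left while restoring one gap to $T_i$) and cannot decrease $D$: every job counted in (ii) still has release $\ge S_m$ and deadline $\le S_m+\mathit e$ (after full compression the deadlines back from $J^{*}$ are $S_m+\mathit e-kT_i$, all inside the interval); a job that the shift moves from outside the interval into it can only add a term $C_i^m$; and the carry-over job is pushed only toward $S_m$, so its remaining window $x$ weakly increases, and since $\max(0,C_i^{m-1}-x+D_i^m-D_i^{m-1})$ is nonincreasing in $x$ its bounded contribution weakly increases (and if it is pushed all the way to $S_m$ it becomes an ordinary job contributing the full $C_i^m$, which dominates any carry-over demand because $C_i^{m-1}\le C_i^m$). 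When no gap exceeds $T_i$ the sequence is in the canonical form of Figure~\ref{fig:dbfc3}; taking the supremum over the choice of the anchored job and over the position of $S_m$ relative to the $T_i$-spaced chain then yields $dbf_{SM}(\tau_i,\mathit e,m)$.

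The main obstacle is the bookkeeping around the carry-over job, which has no counterpart in the non-MC argument: one must verify that none of the rightward shifts ever \emph{deletes} the carry-over contribution and that the transition ``carry-over job becomes the first in-interval job'' is non-decreasing, both of which rely on the exact shape of the bound from the Lemma on the demand of carry-over jobs (monotonicity in $x$, together with the fact that the bound never exceeds $C_i^m$). A secondary point is that $S_m$ is not a free parameter but must be realizable by some legal lower-criticality execution; since the argument never strengthens any constraint on $\tau_i$ itself and only ever invokes the already-proved carry-over bound, this realizability is preserved throughout, but it is worth stating explicitly.
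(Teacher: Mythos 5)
The paper never proves this lemma: it is stated in Section~\ref{sec:ExistDBF} purely as background and attributed to~\cite{EkYi14}, so there is no in-paper proof to compare yours against. Judged on its own, your exchange argument is correct, and it is in the same family as the release-shifting case analyses the paper \emph{does} give in Appendix~\ref{sec:proof} for its own multi-mode analogues (Lemmas~\ref{lem:pat2} and~\ref{lem:pat3}); the difference is one of structure. The appendix proofs fix a small set of candidate anchor positions for the first carry-over release and argue, case by case, that sliding the whole pattern left or right away from each candidate cannot increase demand; you instead run a general two-phase normalization (anchor the latest contributing virtual deadline at $S_m+\mathit e$, then compress the preceding gaps to exactly $T_i$), which converges to the single canonical pattern of Figure~\ref{fig:dbfc3} without enumerating candidates. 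The two points you isolate as the genuinely MC-specific obligations are exactly right: (a) the carry-over contribution, viewed as a function of the remaining window $x$, is nondecreasing --- it is $0$ for $x<D_i^m-D_i^{m-1}$, jumps to $C_i^m-C_i^{m-1}\geq 0$ at the threshold, and then grows to the cap $C_i^m$, so pushing the carry-over job toward $S_m$ never loses demand and absorbing it into the interval dominates; and (b) rightward compression never expels an in-interval job, since after anchoring each shifted job's deadline becomes its successor's deadline minus $T_i\leq S_m+\mathit e$. One small point worth tightening if you write this up: when $\mathit e<D_i^m$ the anchor $J^{*}$ is itself the carry-over job, and shifting it changes term (i) rather than leaving it ``unchanged''; monotonicity in $x$ again covers this, but the sentence claiming the carry-over job is left in place should be scoped to the $\mathit e\geq D_i^m$ case, as you partly do.
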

Therefore the demand bound function of $\tau_i$ in $L_{m}~(m>1)$ criticality mode can be summarized as follows:
\begin{equation}
dbf_{SM}(\tau_i,\mathit{e},m)=\max(0, (1+\lfloor \frac{ \mathit e -(D_i^m-D_i^{m-1})}{T_i} \rfloor )\times C_i^m)-done_m(\tau_i,\mathit e)
\end{equation}
Here $done(\tau_i,\mathit e)$ captures the execution demand of the carry-over job that must finish before $S_m$ and is equal to:
\begin{equation}
\label{eqn:done}
\begin{split}
   done_m(\tau_i,\mathit e)=
   \begin{cases}
   \max(0,C_i^{m-1}-mod(\mathit e,T_i)+D_i^m-D_i^{m-1}), &\mbox{if } D_i^m-D_i^{m-1}\leq mod(\mathit e,T_i)\\ &\mbox{ and }mod(\mathit e,T_i) <D_i^m \\
   0 &\mbox{otherwise}
   \end{cases}
\end{split}
\end{equation}
where $mod(\mathit e,T_i)=e~mod~T_i$ is equal to the length of the  remaining scheduling window of the carry-over job after $S_m$.

\begin{figure}[tbh]
\begin{center}
\noindent
  \includegraphics[width=0.65\textwidth]{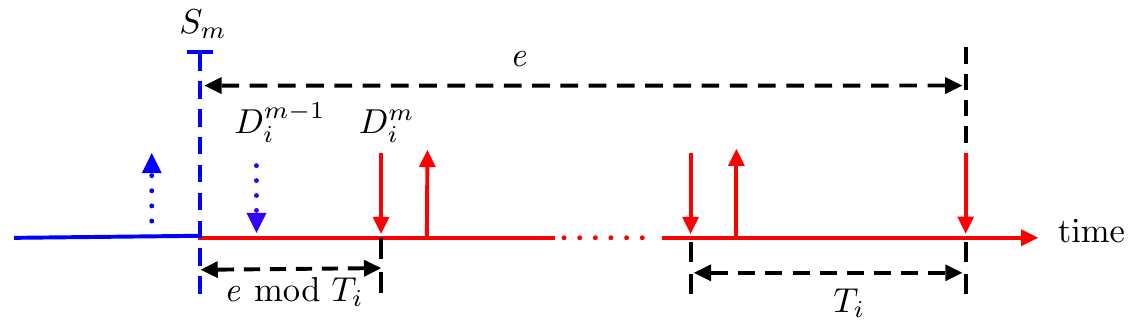}
  \end{center}
    \caption{Demand of $\tau_i$ in $L_{m}$~$(m>1)$ Criticality Mode}
    \label{fig:dbfc3}
\end{figure}

\section{Multi-Mode DBF Based Test For MC Task Systems}
\label{sec:GIT}
The SM demand bound function  considers the behavior of each criticality mode separately. That is, it does not use the execution demand in the previous criticality mode (e.g, $L_{m-1}$) to determine the remaining execution for carry-over jobs when mode switch happens at  $S_m$.  If the execution load of task set $\tau$ in $L_{m-1}$ criticality mode is low, then carry-over jobs of many tasks would finish well before their deadlines, and  would not generate any  carry-over demand.  This property leads to some interesting results like the fact that SM dbf based test  cannot even schedule some task sets that are schedulable by reservation based approaches (i.e., all tasks are allocated $C_i^{L_i}$ execution budgets). 


\begin{example}
Suppose task set $\tau=\{\tau_1,\tau_2\}$ has two tasks, where $\tau_1$ and $\tau_2$ are  given in  the following table.  Obviously $\tau$ is schedulable by reservation based approaches because  the utilization $\frac{C_1^2}{T_1}+\frac{C_2^2}{T_2}<1$.  However according to GreedyTuning~\cite{EkYi14}, $\tau$ is not schedulable because $dbf_{SM}(\tau,4,2)\leq 4\wedge dbf_{SM}(\tau,4,1)\leq 4$ is not true for any possible combination  of virtual deadlines.  We have to set small virtual deadlines in order to make $dbf_{SM}(\tau,4,2)\leq 4$.  However,  by doing so,  $dbf_{SM}(\tau,4,1)$  would be  greater than $4$.

\begin{table}[h]
\center
\begin{tabular}{|l|l|l|l|l|}
 \hline
Task& $T_i$& $C_i$&$D_i$&$L_i$  \\
 \hline
$\tau_1$& $15$& $\{C_1^1=3,C_1^2=7\}$  &$15$& $2$\\
$\tau_2$& $2$ & $\{C_2^1=1,C_2^2=1\}$ &$2$&$2$ \\
 \hline
\end{tabular}
\label{ETSt}
\end{table}
\end{example}



To address this drawback,  in this section  we propose a Multi-Mode (MM) demand bound function that collectively bounds the  demand of $\tau$ in  $L_{m-1}$ and $L_m$ criticality modes. Suppose the time interval of interest is $[S_{m-1},  S_{m-1}+\mathit e)$ where $S_{m-1}\leq S_m\leq S_{m-1}+\mathit e$. For simplicity  we assume $S_{m-1}=0$  because  what determines the total demand is the time interval length $S_m-S_{m-1}$ and $\mathit e$, i.e., the dbf is independent of the exact value of $S_{m-1}$. As a result, the interval becomes $[0,\mathit e)$ where $0\leq S_m\leq \mathit e$.  Let $dbf_{MM}(\tau,S_m,\mathit{e},m)$ denote the total execution demand of $\tau$ for the time interval $[0,\mathit e)$.

The test proposed in \cite{Eas13} considers the system behavior from the start of a busy interval in a dbf-based analysis, but is limited to a dual-criticality task system.   In a dual-criticality task system, there is at most one mode-switch. However, in a  task system with more than two criticality levels, we have to consider the demand from jobs released before $S_{m-1}$ but have deadline after $S_{m-1}$.  As a result, the dbf analysis in \cite{Eas13} is no longer valid in multi-criticality systems.  In this section we extend the dbf analysis in \cite{Eas13} to multi-criticality systems and   present a MM dbf based test for multi-criticality systems. 

The following theorem is a straightforward extension of Theorem~\ref{theorem:theorem0}, and can be used to determine whether a task set $\tau$ is schedulable by EDF on a dedicated unit speed uniprocessor platform.
\begin{theorem}
\label{theorem:MM}
A MC task set $\tau$ is schedulable by EDF on a dedicated unit speed uniprocessor platform  for all the criticality modes if the following conditions hold:
\begin{align*}
\forall~m\in\{1,2,\ldots,M\}:\forall~\mathit{e}\in\{1,2,\ldots,\mathit e^{max}\}:\forall S_m\in\{1,2,\ldots,\mathit e\}:~dbf_{MM}(\tau,S_m,\mathit{e},m)\leq \mathit e
\end{align*}
where $\mathit e^{max}$ is  pseudo-polynomial in the size of the input~\cite{GU15} if the system utilization of each criticality mode is bounded by some constant smaller than 1, and is also derived in Appendix~\ref{sec:emax}.
\end{theorem}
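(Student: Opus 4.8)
The plan is to prove the contrapositive by the usual busy-interval demand argument, adapted to the mixed-criticality semantics and, following the two-level idea of~\cite{Eas13}, starting the analysis window \emph{before} the relevant mode switch rather than at it. Suppose $\tau$ is not served correctly in some criticality mode, and let $t_f$ be the earliest instant at which some job $J^\ast$ of a task $\tau^\ast$ fails to complete by the virtual deadline it is then required to meet; let $L_m$ be the mode active at $t_f$, so $L^\ast\ge m$. By minimality of $t_f$, every virtual deadline elapsing before $t_f$ (in whichever mode was active when it elapsed) is met, so $\tau$ is served correctly in modes $L_1,\dots,L_{m-1}$ up to $t_f$; equivalently one may set this up as an induction on $m$. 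If $m=1$ the run is indistinguishable from a non-MC run with parameters $(T_i,C_i^1,D_i^1)$, $dbf_{MM}(\tau,\cdot,e,1)$ coincides with $\sum_{\tau_i\in\tau}dbf(\tau_i,e)$, and the claim is Theorem~\ref{theorem:test_non}; so assume $m\ge 2$, and let $S_{m-1}\le S_m\le t_f$ denote the switch instants into $L_{m-1}$ and $L_m$.

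Next I would locate the busy interval. Let $t_0'$ be the latest instant $\le t_f$ at which the processor is idle or is running a job whose current virtual deadline exceeds $t_f$, and put $t_0=\max(t_0',\,S_{m-1})$, so $t_0\ge S_{m-1}$ by construction. The first step is to show that on $[t_0,t_f)$ the processor is continuously busy and runs only jobs of current virtual deadline $\le t_f$: on $(t_0',t_f)$ this is the definition of $t_0'$, and in the case $t_0=S_{m-1}>t_0'$ one additionally uses that every job dropped at $S_{m-1}$ has criticality $<m-1$ and is therefore never executed after $S_{m-1}$, so the processor cannot go idle on $[S_{m-1},t_f)$ (if it did, then $t_0'\ge S_{m-1}$). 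Hence the work done on $[t_0,t_f)$ comes only from jobs released in $[t_0,t_f)$, together with --- only when $t_0=S_{m-1}$ --- the carry-over jobs active at $S_{m-1}$ (criticality $\ge m-1$, released before $S_{m-1}$, deadline after $S_{m-1}$, deadline $\le t_f$). Since the processor does exactly $t_f-t_0$ units of such work yet $J^\ast$ misses, the total execution demand of these jobs over $[t_0,t_f)$ is strictly greater than $t_f-t_0$.

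It then remains to bound that demand by $dbf_{MM}\!\big(\tau,\,S_m-S_{m-1},\,t_f-S_{m-1},\,m\big)$, which by hypothesis --- instantiated with $e=t_f-S_{m-1}$ and the shifted switch instant $S_m-S_{m-1}$ in the quantified ranges, using the shift-invariance observed in Section~\ref{sec:GIT} --- is at most $t_f-S_{m-1}=t_f-t_0$, the required contradiction. I would split the demand into (i) the residual work at $S_{m-1}$ of the carry-over jobs above; (ii) the work of jobs released in $[S_{m-1},S_m)$ (criticality $\ge m-1$), whose per-job demand may escalate to $C_i^m$ when $S_m$ precedes their deadline; and (iii) the work of jobs released in $[S_m,t_f)$ (criticality $\ge m$). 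For (i), the lemma on the demand of carry-over jobs (Figure~\ref{fig:dbfc2}), applied to the switch into $L_{m-1}$, bounds the residual work in terms of the remaining window --- the relevant job was on track to meet its $L_{m-2}$ virtual deadline by minimality of $t_f$ --- exactly as $dbf_{MM}$ charges it; for (ii)--(iii), each task's contribution is maximised by a late-release pattern in the style of Lemma~\ref{lemma:pt1}. Summing over tasks yields the bound, and finiteness of the quantifier range (the pseudo-polynomial $e^{max}$) is established in Appendix~\ref{sec:emax}.

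The main obstacle is the structural step that the busy interval leading to a miss in $L_m$ can always be chosen to begin at or after $S_{m-1}$, so that exactly one mode switch ($S_m$) lies inside it: this is precisely what keeps the multi-criticality test at the same complexity as the dual-criticality one, sidestepping any enumeration of combinations of switch instants. Proving it needs care with the drop semantics at $S_{m-1}$ (an otherwise-busy interval cannot acquire an internal idle point there without $t_0'$ moving forward), and the decomposition (i)--(iii) must then be matched against the concrete closed form of $dbf_{MM}$ --- a worst-case release-pattern verification that is the real content of the remainder of Section~\ref{sec:GIT}.
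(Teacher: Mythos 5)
Your argument is essentially the one the paper relies on but never writes down: Theorem~\ref{theorem:MM} is asserted only as ``a straightforward extension of Theorem~\ref{theorem:theorem0}'', and the standard busy-window/contrapositive argument you give --- truncate the busy interval at $S_{m-1}$, absorb the pre-$S_{m-1}$ history into the carry-over jobs $J_i^A,J_i^C$ via schedulability of mode $L_{m-2}$, and charge everything after $t_0$ to $dbf_{MM}$ --- is exactly the justification the paper's construction of $dbf_{MM}$ presupposes. Your identification of the key structural point (at most one switch, $S_m$, lies strictly inside the truncated window, which is what keeps the test's complexity quadratic rather than exponential in the number of levels) matches the paper's stated motivation in Section~\ref{sec:GIT}. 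One small bookkeeping slip: after setting $t_0=\max(t_0',S_{m-1})$ you instantiate the hypothesis with $e=t_f-S_{m-1}$ and write $t_f-S_{m-1}=t_f-t_0$, which only holds when $t_0'\le S_{m-1}$; in the case $t_0=t_0'>S_{m-1}$ you should instead instantiate with $e=t_f-t_0$ and $S_m-t_0$, observing that the no-carry-over release pattern is one of those over which $dbf_{MM}$ maximises, so the bound still applies. (Note also that an idle instant cannot occur in $(S_{m-1},t_f)$ at all under the paper's switch-back-on-idle assumption, so $t_0'>S_{m-1}$ can only arise from a job with deadline beyond $t_f$.) As you say, the remaining content --- that the closed-form $dbf_{MM}$ really majorises the demand of the release patterns in (i)--(iii) --- is carried by Lemma~\ref{lem:pat2}, Lemma~\ref{lem:pat3} and the per-job formulas in the appendix, which is where the paper does put its proofs.
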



In the time interval of interest $[S_{m-1}=0,\mathit e)$, a job can experience at most two mode-switches at $S_{m-1}=0$ and $S_m$, respectively. We categorize such carry-over jobs into four types: $J_i^X$ where $X\in\{A,B,C,D\}$, and use $J_i^{X+1}$ to denote the next job released after $J_i^X$. The patterns of these jobs are shown in Figure~\ref{fig:spj}.  
\begin{description}
\item[$J_{i}^{A}$] $L_i=m-1$, $r(J_i^{A})<S_{m-1}$ and $r(J_i^{{A+1}})\geq S_{m-1}$.
\item[$J_i^{B}$] $L_i=m-1$, $r(J_i^{B})\geq S_{m-1}$, $r(J_i^{{B}})\leq S_m$ and $r(J_i^{{B+1}})>S_m$.
\item[$J_i^{C}$] $L_i\geq m$, $r(J_i^{C})<S_{m-1}$ and $r(J_i^{{C+1}})\geq S_{m-1}$.
\item[$J_i^{D}$] $L_i\geq m$, $r(J_i^{D})\geq S_{m-1}$, $r(J_i^{{D}})\leq S_m$ and $r(J_i^{{D+1}})> S_m$.
\end{description}
\begin{figure}[h]
\centering 
\includegraphics[scale=0.75]{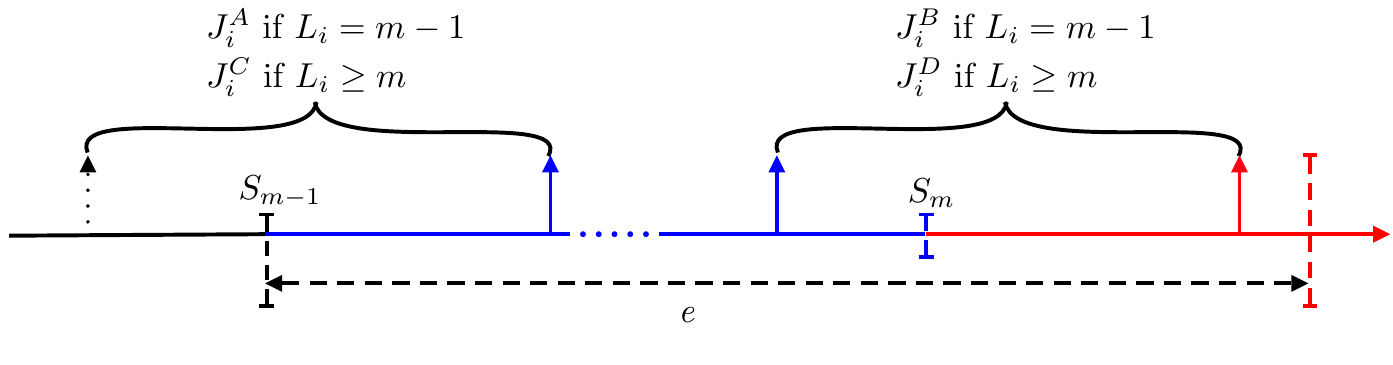} 
\caption{Four types of jobs that experience mode switch.}
\label{fig:spj}
\end{figure}
Let $dbf_{MM}(J_i^{X},S_m,\mathit e,m)$, where $X\in \{A,B,C,D\}$, to denote the  demand of job $J_i^{X}$ during $[0,\mathit e)$. Given the value of $r(J_i^X)$, we can simply calculate its demand. Thus  the detailed equations of $dbf_{MM}(J_i^{X},S_m,\mathit e,m)$ will be presented in Appendix~\ref{sec:dbfjob}  because it is very intuitive.

\subsection {MM dbf for a single task ($dbf_{MM}(\tau_i,S_m,\mathit{e},m)$)} 
In the time interval $[S_{m-1}=0,\mathit e)$,  tasks with $L_i=m-1$ could execute during $[0,S_m)$, but would be dropped after $S_m$. Tasks with $L_i\geq m$ could execute during the whole time interval $[0,\mathit e)$.  For $L_1$ criticality mode (the initial mode),   $dbf_{MM}(\tau_i,S_0,\mathit e,1)$  can be obtained from demand bound function for non-mixed sporadic tasks (see Equation~\ref{eqn:dbf^0_task_1}).
\subsubsection {Case when $L_i=m-1$}
The maximum demand generated by task $\tau_i$ with $L_i=m-1$ during the interval $[0,S_m)$ is equal to the sum of demand of all jobs released  during $[0,S_m)$ and the execution demand of $J_i^{A}$ during $[0,S_m)$.

Let $DEM_i(r_A)$ denote the demand that $\tau_i$ generates   during $[0,S_m)$ when $r(J_i^{A})=r_A$ and all jobs are released  as soon as possible with period $T_i$. Given $r(J_i^{A})$ there will be at most $n_{m-1}=\lfloor(S_m-(r(J_i^{A})+T_i))/T_i\rfloor$  jobs released during $[0,S_m)$. Here $n_{m-1}$ denote the number of jobs released between $J_i^{A}$ and $J_i^{B}$.  When $r(J_i^{{A}})+T_i>S_m$, i.e., $J_i^{A}$ is the only job from $\tau_i$ that generates demand during $[0,\mathit e)$, and  in this case $DEM_i(r_A)$ is equal to $dbf_{MM}(J_i^{A},S_m,\mathit e,m)$. Then we have 
\begin{equation}
\begin{split}
DEM_i(r_A)=\!
 \begin{cases}
&\!\!\!dbf_{MM}(J_i^{A},S_m,\mathit e,m)~\mbox{~~~~~~~~~~~~~~~~~~~~~~~~~~~~~~~~~~~~~~~~~~~~~~~~if~}r(J_i^A)+T_i>S_m\\ 
&\!\!\!dbf_{MM}(J_i^{A},S_m,\mathit e,m)\!+\!n_{m-1}C_i^{m-1}\!+\!dbf_{MM}(J_i^{B},S_m,\mathit e,m)~\mbox{~~otherwise}\\
\end{cases}
\end{split}
\end{equation}
If all tasks have integer release times, we can simply get
\[
dbf_{MM}(\tau_i,S_m,\mathit{e},m)=\max_{r_A\in\{0,-1,-2,\ldots,-T_i\}}\{DEM_i(r_A)\}
\]
For more generic cases, the lemma defines the release pattern when $\tau_i$~$(L_i=m-1)$ generates maximum demand during $[0,S_m)$.


\begin{figure}[tbh]
\centering
\subfigure[$r(J_i^{A})=-D_i^{m-2}+C_i^{m-2}$]{
\label{fig:p3}
        \begin{minipage}[t]{9cm}
        \includegraphics[width=\textwidth]{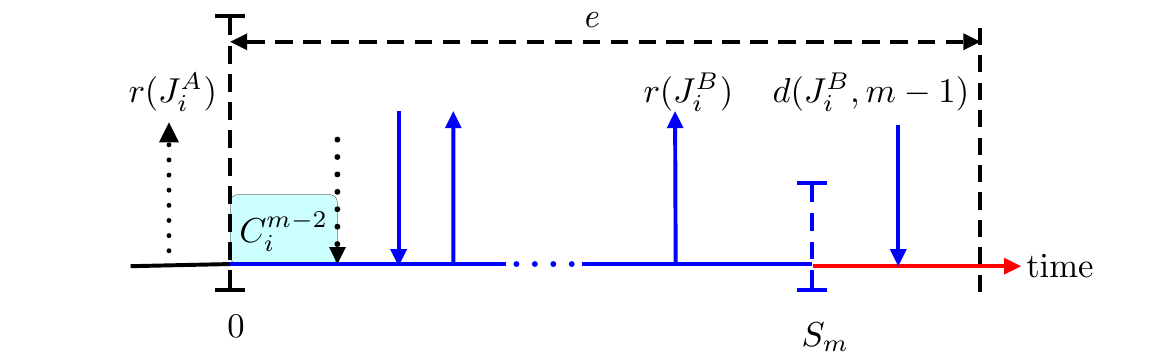}
        \end{minipage}
}
\subfigure[$r(J_i^{A})=mod(\mathit e- D_i^{m-1},T_i)-T_i$]{
\label{fig:p4}
        \begin{minipage}[t]{9cm}
        \includegraphics[width=\textwidth]{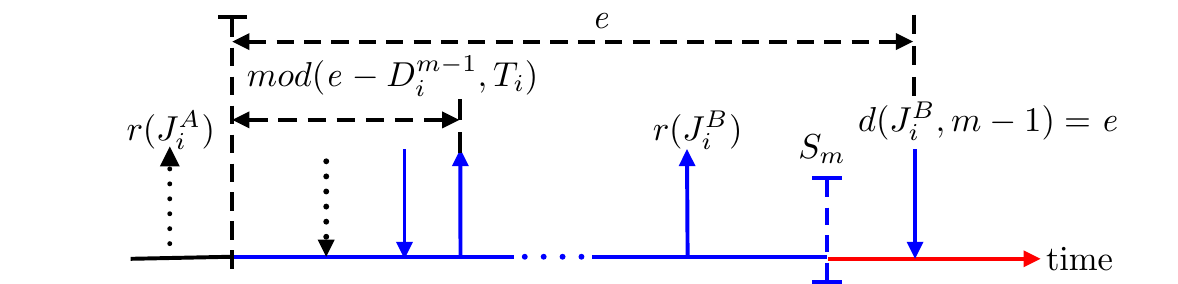}
        \end{minipage}
}
\caption{Job release pattern for tasks with $L_i=m-1$}
\end{figure}

\begin{lemma}
\label{lem:pat2}
Task $\tau_i$ with $L_i=m-1$ generates maximum demand during $[0,S_m)$ if  $r(J_i^{A})=-D_i^{m-2}+C_i^{m-2}$  or $r(J_i^{A})=mod(\mathit e- D_i^{m-1},T_i)-T_i$ (i.e.,  deadline  $d(J_i^{{A}},m-1)=\mathit e\vee d(J_i^{{B}},m-1)=\mathit e$), and all the jobs are released as soon as possible with period $T_i$.  These two pattens are shown in Figure~\ref{fig:p3} and  Figure~\ref{fig:p4}, respectively.

\end{lemma}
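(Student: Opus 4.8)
The plan is to view $DEM_i$ as a function of the single continuous parameter $r_A := r(J_i^A)$, with all later jobs released exactly $T_i$ apart, to show it is piecewise linear on a bounded parameter range, and to argue its maximum falls on one of the two offsets named in the statement. First I would justify restricting to this one-parameter family. Since a task $\tau_i$ with $L_i = m-1$ is discarded at $S_m$, all of its demand counted in $dbf_{MM}(\tau_i,S_m,\mathit e,m)$ already lies in $[0,S_m)$, and advancing the release of any job other than $J_i^A$ never shrinks the part of its scheduling window inside $[0,\mathit e)$; hence the worst case occurs for the periodic (ASAP) release pattern over which $DEM_i(r_A)$ is defined. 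As $J_i^A$ is the unique job with $r(J_i^A) < S_{m-1}=0 \le r(J_i^{A+1})$, we have $r_A \in (-T_i,0)$, and the carry-over lemma (applied to the switch to $L_{m-1}$ at time $0$) shows that for $r_A < -D_i^{m-2}$ the job $J_i^A$ is already finished at time $0$ and contributes nothing.

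Next I would make the dependence on $r_A$ explicit. Using the two-case formula for $DEM_i(r_A)$, write $DEM_i(r_A) = g(r_A) + n_{m-1}(r_A)\,C_i^{m-1} + h(r_A)$, where $g(r_A) = dbf_{MM}(J_i^A,S_m,\mathit e,m)$, $h(r_A) = dbf_{MM}(J_i^B,S_m,\mathit e,m)$, and the middle term is absent when $r_A + T_i > S_m$. By the carry-over lemma, $g$ is the remaining-demand bound $C_i^{m-1}-\max(0,\,C_i^{m-2}-D_i^{m-2}-r_A)$, clipped above by $\min(S_m,\,r_A+D_i^{m-1})$; this is nondecreasing and piecewise linear, and (the $S_m$-clip aside) it reaches its ceiling $C_i^{m-1}$ exactly at $r_A = -D_i^{m-2}+C_i^{m-2}$ and stays there --- this is the first named offset, at which $J_i^A$'s carry-over into $L_{m-1}$ is maximal (Figure~\ref{fig:p3}). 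The step function $n_{m-1}(r_A)=\lfloor(S_m-r_A-T_i)/T_i\rfloor$ is nonincreasing with at most one jump on $(-T_i,0)$, occurring exactly where $r(J_i^B)=r_A+(n_{m-1}+1)T_i = S_m$. Finally $h$ depends on $r_A$ only through $r(J_i^B)$, and by the piecewise-linear forms of $dbf_{MM}(J_i^{X},\cdot)$ in Appendix~\ref{sec:dbfjob} it equals $C_i^{m-1}$ while $d(J_i^B,m-1)=r(J_i^B)+D_i^{m-1}\le\mathit e$ and is strictly smaller once $d(J_i^B,m-1)>\mathit e$, with the changeover at $r(J_i^B)=\mathit e-D_i^{m-1}$, i.e. $r_A = mod(\mathit e-D_i^{m-1},T_i)-T_i$ --- the second named offset (Figure~\ref{fig:p4}).

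Being piecewise linear on a bounded range, $DEM_i$ attains its maximum at a breakpoint or an endpoint. The ``extra'' candidates, beyond the two named offsets, are the lone $n_{m-1}$-jump, the point $r_A=-D_i^{m-2}$ (where $g$ jumps up from $0$), and the two open endpoints $r_A\to 0^{-}$ and $r_A\to(-T_i)^{+}$, and I would dominate each by a named offset. At the $n_{m-1}$-jump $r(J_i^B)=S_m$, so $J_i^B$ is released at the switch and, being of level $m-1$, is immediately dropped, giving $h=0$; moving $r_A$ slightly left keeps $n_{m-1}$ fixed while strictly enlarging $J_i^B$'s window before $S_m$, so the adjacent linear piece (maximized at a named offset or at $-D_i^{m-2}$, handled next) is no worse. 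For the right endpoint: while $r_A<-D_i^{m-2}+C_i^{m-2}$, increasing $r_A$ within one $n_{m-1}$-piece raises $g$ at unit rate without lowering the other terms; once $r_A\ge -D_i^{m-2}+C_i^{m-2}$, $g$ is flat and increasing $r_A$ only slides $r(J_i^B)$ right, so $DEM_i$ is nonincreasing until $d(J_i^B,m-1)$ first hits $\mathit e$ and strictly decreasing afterwards --- hence $r_A\to 0^{-}$ is never optimal and the best offset there is $mod(\mathit e-D_i^{m-1},T_i)-T_i$. The point $r_A=-D_i^{m-2}$ and the left endpoint are handled symmetrically: pushing $r_A$ rightward grows $g$ (or enlarges the carry-over contribution) at least as fast as any loss from a possible decrease of $n_{m-1}$ or $h$. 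Combining these comparisons yields that the maximum of $DEM_i$ is at $r_A=-D_i^{m-2}+C_i^{m-2}$ or at $r_A=mod(\mathit e-D_i^{m-1},T_i)-T_i$ with later jobs released ASAP, as claimed; the degenerate regime $r_A+T_i>S_m$ (only $J_i^A$ present) follows at once, as there $DEM_i=g$ is monotone with relevant maximizer $r_A=-D_i^{m-2}+C_i^{m-2}$.

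The hard part will be the case analysis underlying the previous paragraph: one must simultaneously track whether $S_m<T_i$ (hence whether middle jobs and $J_i^B$ exist at all), whether $g$ is limited by $S_m$ rather than by the carry-over bound, whether it is $d(J_i^A,m-1)$ or $d(J_i^B,m-1)$ that is clipped by $S_m$ versus by $\mathit e$, and on which side of each named offset the single $n_{m-1}$-jump lies, and then verify in every resulting regime that a named offset dominates. The small-$S_m$ clipping subcases --- and in particular showing that gaining an extra full job by pushing $r_A$ below $-D_i^{m-2}$ never beats the named patterns --- are the most delicate; the remainder is routine bookkeeping once the piecewise-linear forms of $dbf_{MM}(J_i^A,\cdot)$ and $dbf_{MM}(J_i^B,\cdot)$ from Appendix~\ref{sec:dbfjob} are in hand.
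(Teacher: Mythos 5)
Your overall strategy coincides with the paper's: both treat the demand as a function of the single offset $r_A=r(J_i^A)$ over one period, decompose it into the contributions of $J_i^A$, the $n_{m-1}$ intermediate jobs, and $J_i^B$, and argue by piecewise monotonicity that the maximum sits at one of the two named offsets (the paper phrases this as ``shifting the release pattern left/right'' in its cases (C1)--(C4)). However, one step of your argument fails as written. You claim that for $r_A<-D_i^{m-2}+C_i^{m-2}$, ``increasing $r_A$ within one $n_{m-1}$-piece raises $g$ at unit rate without lowering the other terms.'' This is not true: as $r_A$ increases, $r(J_i^B)$ increases with it, so $d(J_i^B,m-1)$ can cross $\mathit e$ inside that range, at which point $h=dbf_{MM}(J_i^B,S_m,\mathit e,m)$ drops from a positive value $y_1=\min\{C_i^{m-1},S_m-r(J_i^B)\}$ to $0$ (and even before the crossing, $h$ decreases at unit rate once $S_m-r(J_i^B)<C_i^{m-1}$). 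Consequently the maximum over that region is not automatically attained at its right end $-D_i^{m-2}+C_i^{m-2}$; whether it is there or at the crossing point $r_A=mod(\mathit e-D_i^{m-1},T_i)-T_i$ depends on weighing the gain $y_1$ of $J_i^B$ against the concurrent loss of $J_i^A$'s demand. This is exactly the $y_1+x_1$ comparison carried out in the paper's case (C4), and it is the genuinely delicate part of the proof, not routine bookkeeping.

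The lemma's conclusion survives this error only because the breakpoint you mishandled happens to be the second named offset, so it is already in the candidate set; but your stated justification for discarding everything left of $-D_i^{m-2}+C_i^{m-2}$ is invalid and must be replaced by the explicit trade-off analysis. A smaller omission: your breakpoint inventory leaves out the kink of $h$ at $r(J_i^B)=S_m-C_i^{m-1}$, where the $\min$ in $dbf_{MM}(J_i^B,S_m,\mathit e,m)$ switches branches (related to this, your assertion that $h$ equals $C_i^{m-1}$ whenever condition (4) holds ignores the $S_m-r(J_i^B)$ clip). That kink turns out to be dominated because the total is flat between it and a named offset, but it needs to be listed and dismissed. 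With these two repairs your piecewise-linear bookkeeping becomes essentially a reorganization of the paper's cases (C1)--(C4).
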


\begin{proof}
The proof for Lemma~\ref{lem:pat2} can be found in Appendix~\ref{sec:proof} .
\end{proof}
\noindent Thus using Lemma~\ref{lem:pat2}, we  define $dbf_{MM}(\tau_i,S_m,\mathit e,m)$ for task $\tau_i$ with $L_i=m-1$ as follows.   
\begin{equation}
dbf_{MM}(\tau_i,S_m,\mathit{e},m)=\max\left\{DEM_i\left(-D_i^{m-2}\!+C_i^{m-2}\right),DEM_i\left(mod(\mathit e\!- D_i^{m-1},T_i)\!-T_i\right)\right\}
\end{equation}

\subsubsection {Case when $L_i\geq m$}
The demand generated by task $\tau_i$ with $L_i\geq m$ during the interval $[0,\mathit e)$ is equal to the sum of demand of all jobs released  during $[S_{m-1}=0,\mathit e)$ and the demand of $J_i^{C}$ as shown in Figure~\ref{fig:dbfs1}.

Let $DEM_i(r_C)$ denote the demand that $\tau_i$ generates   during $[0,S_m)$ when $r(J_i^{C})=r_C$ and all jobs are released  as soon as possible with period $T_i$.  Let $n_{m-1}$ denote the number of jobs released during  $[S_{m-1}, r(J_i^{D}))$ and $n_m$ denote the number of jobs released after $S_m$ with deadline $D_i^m$ before $\mathit e$. Given the value of $r(J_i^{C})$, $n_{m-1}=\lfloor  (S_m-r(J_i^{C})-T_i)/T_i \rfloor$.  The first job released after $J_i^{D}$, $J_i^{{D+1}}$, is released at  $r(J_i^{C})+(n_{m-1}+2)\times T_i$, and hence  $n_m=\lfloor (\mathit e-r(J_i^{{D+1}})-D_i^m)/T_i\rfloor+1$. Note that if $n_{m-1}<0$, i.e., $S_m<r(J_i^{C})+T_i$, it implies there does not exist $J_i^{D}$. Therefore we have 
\begin{equation}
\begin{split}
DEM_i(r_C)=
 \begin{cases}
dbf_{MM}(J_i^{C},S_m,\mathit e,m)+n_{m-1}C_i^{m-1}+ dbf_{MM}(J_i^{D},S_m,\mathit e,m)+n_mC_i^{m}\\
\mbox{~~~~~~~~~~~~~~~~~~~~~~~~~~~~~\!~~~~~~~~~~~~~~~~~~~~~~~~~~~~~~~~~~if~}S_m\geq r(J_i^{C})+T_i\\
dbf_{MM}(J_i^{C},S_m,\mathit e,m)+ n_mC_i^{m}~~~~~~~~~~~~~~\mbox{if~~} S_m<r(J_i^{C})+T_i
\end{cases}
\end{split}
\end{equation}
\begin{figure}[h!]
\centering
\includegraphics[width=12cm]{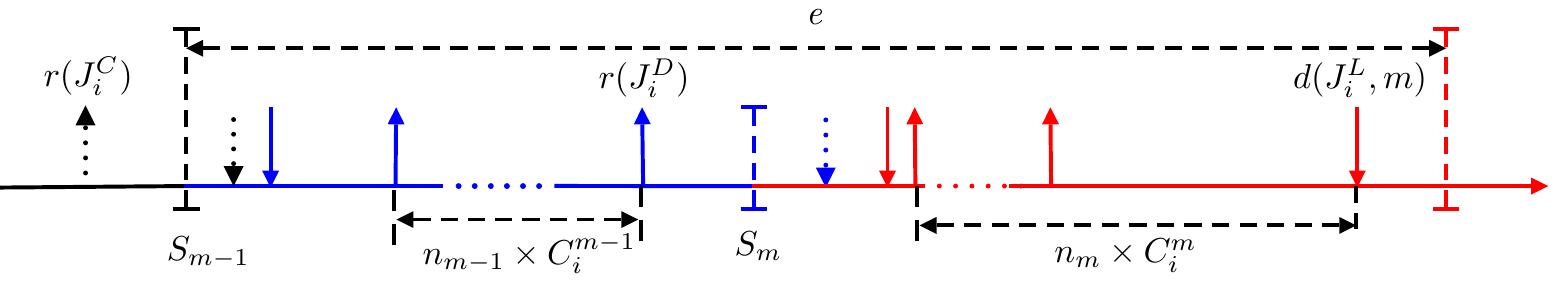}
\caption{Job release pattern for tasks with $L_i\geq m$}
\label{fig:dbfs1}
\end{figure}
If all tasks have integer release times, we can simply get  
\[
dbf_{MM}(\tau_i,S_m,\mathit{e},m)=\max_{r_C\in\{0,-1,-2,\ldots,-T_i\}}\{DEM_i(r_C)\}
\]
For more generic cases, the lemma defines the  release pattern so that $\tau_i$ with $L_i\geq m$ generates maximum demand during $[0,\mathit e)$.


\begin{lemma}
\label{lem:pat3}
Task $\tau_i$ with $L_i\geq m$ generates maximum demand during $[0,\mathit e)$ if $r(J_i^{C})=mod(\mathit e- D_i^{m},T_i))-T_i$ (i.e, the last job released before $\mathit e$ has $d(J_i^L,m)=\mathit e$) or $r(J_i^{C})=mod(\mathit e- D_i^{m-1},T_i)-T_i$ (i.e., $d(J_i^L,m-1)=\mathit e$) or $r(J_i^{C})=-D_i^{m-2}+C_i^{m-2}$ and all the jobs are released as soon as possible with period $T_i$. 
\end{lemma}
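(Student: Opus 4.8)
The plan is to regard $DEM_i(r_C)$ as a function of the single scalar offset $r_C=r(J_i^{C})$ and show that its maximum over the feasible range is always attained at one of the three stated positions. First I would fix the mode-switch instants $S_{m-1}=0$ and $S_m$ (the surrounding development, and Theorem~\ref{theorem:MM}, already quantify over $S_m$ separately, so it suffices to argue uniformly in $S_m$) and observe that, for any fixed $r_C$, releasing every job after $J_i^{C}$ as soon as possible with exact period $T_i$ is demand-maximising: postponing a release can only remove a job from $[0,\mathit e)$ or push a virtual deadline past $S_m$ or past $\mathit e$, and each released job contributes at most $C_i^{m-1}$ or $C_i^m$ regardless of its exact position. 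Hence it remains to optimise over $r_C\in[-T_i,0)$ with all later releases pinned. The degenerate case $n_{m-1}<0$ (no straddling job $J_i^{D}$, so that $DEM_i(r_C)=dbf_{MM}(J_i^{C},S_m,\mathit e,m)+n_mC_i^{m}$) is dispatched by the same reasoning restricted to the conditions $r(J_i^{C})=mod(\mathit e-D_i^{m},T_i)-T_i$ and $r(J_i^{C})=-D_i^{m-2}+C_i^{m-2}$.

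Next I would partition $[-T_i,0)$ into the finitely many maximal sub-intervals on which the integer counts $n_{m-1}=\lfloor(S_m-r_C-T_i)/T_i\rfloor$ and $n_m=\lfloor(\mathit e-r(J_i^{D+1})-D_i^{m})/T_i\rfloor+1$ are constant and the closed forms of $dbf_{MM}(J_i^{C},S_m,\mathit e,m)$ and $dbf_{MM}(J_i^{D},S_m,\mathit e,m)$ (Appendix~\ref{sec:dbfjob}) keep the same branch. The breakpoints of this partition are precisely: the $r_C$ at which some job's $L_m$ virtual deadline hits $\mathit e$, i.e. $n_m$ drops, which are the positions $r_C=mod(\mathit e-D_i^{m},T_i)-T_i$; the $r_C$ at which $J_i^{D}$ becomes the last job before $\mathit e$ with its $L_{m-1}$ deadline at $\mathit e$, or $n_{m-1}$ changes, which are the positions $r_C=mod(\mathit e-D_i^{m-1},T_i)-T_i$; and the value $r_C=-D_i^{m-2}+C_i^{m-2}$ at which the ``already executed before $S_{m-1}$'' subtraction in $dbf_{MM}(J_i^{C},S_m,\mathit e,m)$ switches on — releasing $J_i^{C}$ any earlier either forces a positive $done$ subtraction (decreasing demand) or makes $J_i^{C}$ no longer a carry-over job, in which case the situation reduces to the same lemma with $J_i^{C+1}$ in the role of $J_i^{C}$. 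These are exactly the three cases in the statement.

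On each such sub-interval the only dependence on $r_C$ enters through $dbf_{MM}(J_i^{C},S_m,\mathit e,m)$ and $dbf_{MM}(J_i^{D},S_m,\mathit e,m)$; from their closed forms each is piecewise linear in $r_C$ with slope in $\{-1,0,+1\}$, so I would check that the combined slope of $DEM_i$ is sign-definite on the interior of every piece. It follows that $DEM_i$ attains its per-interval maximum at an endpoint, hence at a breakpoint, and since every breakpoint is of one of the three listed types the global maximum is attained at one of those three release patterns. This mirrors the structure of the proof of Lemma~\ref{lem:pat2}, with extra bookkeeping caused by the second mode switch.

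The main obstacle is this final slope check: $DEM_i$ is a sum of the two piecewise-linear job-demand terms together with the step functions $n_{m-1}C_i^{m-1}$ and $n_mC_i^{m}$, and the delicate point is that a shift of $r_C$ simultaneously enlarges or shrinks $J_i^{C}$'s remaining window at $S_{m-1}$ (hence a possible $done$ subtraction), moves the straddling job $J_i^{D}$ relative to $S_m$, and changes the number $n_m$ of full-budget jobs packed before $\mathit e$ — and these effects pull in opposite directions. One therefore has to pin down precisely the sub-intervals on which the net change is monotone rather than argue globally, and separately verify the reduction used when $J_i^{C}$ ceases to be a carry-over job so that the induction on the window closes cleanly.
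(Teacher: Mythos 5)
Your route is genuinely different from the paper's: you try a global piecewise-linear analysis of $DEM_i(r_C)$ over $r_C\in[-T_i,0)$, whereas the paper splits on the size of $\mathit e-S_m$ relative to $D_i^m$ and $D_i^m-D_i^{m-1}$, handles the case $\mathit e-S_m>D_i^m$ by a direct shift-left/shift-right argument anchored at $d(J_i^L,m)=\mathit e$, and reduces the remaining two cases to Lemma~\ref{lem:pat2} by observing that when $\mathit e-S_m<D_i^m-D_i^{m-1}$ (or after shifting right in the intermediate case) no job can execute beyond $C_i^{m-1}$, so $\tau_i$ behaves exactly like a level-$(m-1)$ task. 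That reduction is the structural insight your proposal never uses, and it is what keeps the paper's argument short.

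As written, your proof has a genuine gap: the claim that the breakpoints of your partition are \emph{precisely} the three listed positions is false, and your conclusion rests on it. For example, $n_{m-1}=\lfloor(S_m-r_C-T_i)/T_i\rfloor$ changes at $r_C\equiv S_m\ (\mathrm{mod}\ T_i)$, not at $r_C=mod(\mathit e-D_i^{m-1},T_i)-T_i$ as you assert; the branch of $dbf_{MM}(J_i^{D},S_m,\mathit e,m)$ flips between $C_i^{m-1}$ and $C_i^{m}$ at $r(J_i^{D})+D_i^{m-1}=S_m$; condition (1) for $J_i^{C}$ changes at $r_C=-D_i^{m-2}$ rather than at $-D_i^{m-2}+C_i^{m-2}$; and the $\min\{C_i^{m-1},S_m-r(J_i^{D})\}$ term introduces a further kink. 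None of these extra breakpoints is among your three candidates, so ``the maximum is attained at a breakpoint, and every breakpoint is listed'' does not go through --- you would additionally have to show that at each unlisted breakpoint the one-sided slopes do not change from positive to negative, which is exactly the ``final slope check'' you flag as the main obstacle and then leave unresolved. Until that verification (or a reduction in the spirit of the paper's cases C2/C3) is supplied, the argument is incomplete.
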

\begin{proof}
The proof for Lemma~\ref{lem:pat3} can be found in Appendix~\ref{sec:proof}.
\end{proof}

 Using Lemma~\ref{lem:pat2}, we define the $dbf_{MM}(\tau_i,S_m,\mathit e,m)$ for task $\tau_i$ with $L_i\geq m$ as follows.   
\begin{equation}
\small
\begin{split}
&dbf_{MM}(\tau_i,S_m,\mathit{e},m)=\\
&\max\left\{DEM_i(mod(\mathit e- D_i^{m},T_i)-T_i)),DEM_i(-D_i^{m-2}+C_i^{m-2}), DEM_i(mod(\mathit e- D_i^{m-1},T_i)-T_i)) \right\}
	\end{split}
\end{equation}

\subsection{Demand bound function for task set $\tau$ ($dbf_{MM}(\tau, S_m,\mathit e,m)$)}
Now we have presented the MM dbf for each task.  We can simply add up the demand of all tasks in the system to get the dbf for the  task set $\tau$.
\begin{equation}
dbf_{MM}(\tau, S_m,\mathit e,m)=\sum\limits_{L_i\geq m-1}dbf_{MM}(\tau_i,S_m,\mathit{e},m)
\end{equation}

\text{Discussion:} The MM dbf derived in this section cannot be directly  applied to a more generalized MC model~\cite{EkYi14} where each task has different periods in different criticality modes.  The main challenge in this extension is that we need to  figure out the execution  pattern (i.e., the release time of $J_i^A$ and $J_i^C$)  that can  result in  worst-case demand in the more generalized model.  Of course, we can still compute the demand for all possible release patterns and choose the maximum one, but it can be very computational expensive. For space reason, we would leave this as our future work.

\section{Virtual Deadline Assignment}
\label{sec:T1}
From the previous sections we know  that virtual deadline for each criticality mode plays a key role in shaping the demand of carry-over jobs.  The choice of virtual deadline for each task therefore has a significant impact on the performance of the proposed schedulability test. We can decrease $\tau_i$'s demand in $L_{m}~(m>1)$ criticality mode at the cost of increasing the demand in previous modes. By choosing suitable values for $D_i^{m-1}$ for each $\tau_i$, we can increase the  chances that $\tau$ is schedulable in $L_m$ criticality mode.

The process of finding suitable values for the virtual deadlines is very challenging, because it is infeasible to try all possible combinations of $D_i^m$ for all the tasks in different criticality modes. In this section, we propose a heuristic virtual deadline tuning algorithm which has  pseudo-polynomial time complexity \re{as long as the system utilization of each criticality mode is  bounded by some constant smaller than 1}.

In Section~\ref{sec:ExistDBF} and Section~\ref{sec:GIT} we introuce the existing single-mode dbf-based schedulability test and our proposed multi-mode dbf-based test, respectively. A MC task system $\tau$ is schedulable in $L_m$ mode assuming no deadline miss happens in previous criticality modes if one of the following two condition holds.
\begin{enumerate}
	\item $\mathbf{CN^S_m}$: $\forall \mathit{e}\in\{1,2,\ldots,\mathit e^{max}\}:~dbf_{SM}(\tau,\mathit{e},m)\leq \mathit e $~(Theorem~\ref{theorem:theorem0})~~. \\
	\item $\mathbf{CN^M_m}$:  $\forall \mathit{e}\in\{1,2,\ldots,\mathit e^{max}\}:$ $\forall S_m\in\{1,2,\ldots,\mathit e\}:~dbf_{MM}(\tau,S_m,\mathit{e},m)\leq \mathit e $~(Theorem~\ref{theorem:MM}).
\end{enumerate}


Though the  complexity of both of the above  conditions is pseudo-polynomial, there is a quadratic increase in complexity to check condition  $\mathbf{CN^M_m}$. \re{For a certain $\mathit e^{max}$, while the SM dbf test~\cite{EkYi14} need to calculate the demand for  $\mathit e^{max}$ times, the MM dbf test  need to calculate the demand for $\mathit e^{max}\times \mathit e^{max}$ times.} Therefore, as shown in line 18-27 in Algorithm~\ref{alg:tunemode},   we  use the condition  $\mathbf{CN^M_m}$ as  a complement, i.e., only when condition  $\mathbf{CN^S_m}$ fails at  a certain $\mathit e$, we will check whether condition  $\mathbf{CN^M_m}$ is satisfied for that time interval length.  Thus, once we find an $\mathit e_{f}$ such that $dbf_{SM}(\tau,\mathit{e}_f,m)>\mathit e_f$, we use the Multi-Mode demand bound function to check whether the following holds.
\begin{align*}
\forall~\mathit{e}\in\{\mathit e_f,\mathit e_f+1,\ldots,\mathit e^{max}\}:~dbf_{MM}(\tau,S_m=\mathit e-\mathit e_{f},\mathit e,m)\leq \mathit e
\end{align*}
If the above inequalities hold, then it means no deadline miss happens $e_f$ time units after $S_m$. Otherwise  a candidate task will be chosen and its virtual deadline $D_i^{m-1}$ is reduced by $1$.

In Algorithm TuneMode(m), if both $\mathbf{CN^M_m}$ and $\mathbf{CN^S_m}$ fail at a certain $\mathit e$, then a candidate task is chosen and its virtual deadline $D_i^{m-1}$ is reduced by one unit. By doing this, we can reduce the demand of these tasks in $L_{m}$ criticality mode. It does this without considering  the schedulability in $L_{m-1}$ criticality mode itself, hoping that $\tau$ in  $L_{m-1}$  criticality mode can later be made schedulable by decreasing deadlines for $L_{m-2}$ criticality mode, i.e., $D_i^{m-2}$. Once $D_i^{m-1}$ decreases to $C_i^{m-1}$, $\tau_i$ is eliminated from the candidate sets $\Psi_m$, and then we need to find another candidate task. We repeat the above steps  until  TuneMode(m)  returns true or the candidate task set $\Psi_m$ becomes empty. 

\begin{algorithm}[ht]
\SetAlgoNoLine
\SetCommentSty{small}
$\Psi_m\leftarrow \{\tau_i | L_i\geq m\}$\;
changed$\leftarrow$\textbf{False}\;
\While{$changed=\mathbf{True}$}
{
	changed$\leftarrow$\textbf{False}\;
	\For{$\mathit e_1\in \{1,2,\ldots,\mathit e^{max}\}$}
	{
		\If{$m=2$ and $dbf_{ISM}(\tau,\mathit e_1,1)>\mathit{e}_1$ }
		{
		\If{$changed=\mathbf{True}$} 
		{

			$D_i^{1}\leftarrow D_i^{1}+1$\;
			$\Psi_m$.remove($\tau_i$)\;
			changed$\leftarrow$\textbf{True}\;
			break\;

		}
		\Else
		{
			return \textbf{False}
		}
		}
		\If{$dbf_{SM}(\tau,\mathit e_1,m)>\mathit{e}_1$}
		{
			$\mathbf{CN^M_m}\leftarrow\mathbf{True}$\;
			\For{$\mathit e_2\in \{\mathit e_1,\mathit e_1+1,\ldots,\mathit e^{max}\}$}
			{
				\If{$dbf_{MM}(\tau,S_m=\mathit e_2-\mathit e_1,\mathit e_2)>\mathit e_2$}
				{
					$\mathbf{CN^M_m}\leftarrow\mathbf{False}$\;
					break\;
				}

			}
		\If{$\mathbf{CN^M_m}=\mathbf{True}$}
		{
				continue\;
		}

		\If{$\Psi_m=\emptyset$}
		{
			return \textbf{False}
		}
		$\tau_i\leftarrow$ find\_candidate\_m($\Psi_m$, $\mathit e)$~(Algorithm~\ref{alg:findcandidate})\;
		$D_i^{m-1}\leftarrow D_i^{m-1}-1$\;
		\If{$D_i^{m-1}< C_i^{m-1}$ }
		{
			$D_i^{m-1}\leftarrow D_i^{m-1}+1$\;
		  $\Psi_m$.remove($\tau_i$)\;
		}
		\Else{
			changed$\leftarrow$\textbf{True}\;
			break\;
		}

	}

	}

}
\Return \textbf{True}
\caption{TuneMode(m)}
\label{alg:tunemode}
\end{algorithm}

An exceptional scenario is that: if we decrease $D_i^1$ to reduce the demand in $L_2$ criticality mode, $\tau$ may become unschedulable in $L_1$ criticality mode. However there does not exist $D_i^0$, and hence we cannot tune $D_i^0$ to reduce the demand in  $L_1$ criticality mode. Therefore in this case, TuneMode(m) will undo the changes to deadlines  that would make $\tau$ unschedulable in $L_1$ criticality mode. \re{Here undo the changes  to deadlines means $D_i^{1}=D_i^{1}+1$, and then the candidate task $\tau_i$ is removed from  $\Psi_m$.}

Algorithm~\ref{alg:tunesystem} (TuneSystem($\tau$))   applies   Algorithm~\ref{alg:tunemode}  (TuneMode(m)) on all the criticality modes starting from $L_{M}$ criticality mode and proceeding in a reverse order, until it has successfully tuned the deadlines in all the criticality modes or failed to do so in  some criticality mode. \re{Therefore, the complexity of Algorithm~\ref{alg:tunesystem} (TuneSystem($\tau$)) increases linearly as the number of levels increases because the complexity of Algorithm~\ref{alg:tunemode}  (TuneMode(m)) is independent of the value of $m$.}
\begin{algorithm}[t]
\SetCommentSty{small}
\SetAlgoNoLine
$M\leftarrow\max_{\tau_i\in \tau}\{L_i\}$\;
\For {$m\in \{M~\mbox{to}~2\}$} 
{
	\If {$TuneMode(m)=\mathbf{False}$}
	{
		\Return $\mathbf{False}$
	}
}
\Return $\mathbf{True}$
\caption{TuneSystem($\tau$)}
\label{alg:tunesystem}
\end{algorithm}

Finally, to select a candidate task for deadline tuning in each iteration, TuneMode(m) uses Algorithm~\ref{alg:findcandidate} (find\_candidate\_m($\Psi$, $\mathit e$)). Note that,  GreedyTuning~\cite{EkYi12} uses a very simple metric to choose a candidate task, i.e., $\tau_i$ with  the maximum
\[\Delta_i=dbf_{SM}(\tau_i,\mathit e,m)-dbf_{SM}(\tau_i,\mathit e-1,m)\]
is always chosen as a candidate task. Here  $\Delta_i$ denote the demand change of $\tau_i$ using Single-Mode dbf if $D_i^{m-1}\leftarrow D_i^{m-1}-1$.

However, there are many other parameters which are also important in choosing a good candidate task. Therefore, in Algorithm~\ref{alg:findcandidate},  we  extend this metric to additionally consider other factors in candidate selection such as the impact of change in virtual deadline on the schedulability for previous mode ($L_{m-1}$). In Section~\ref{sec:eva}, we  show that the new metric for choosing a candidate task outperforms the one in~\cite{EkYi14}.

To maximize schedulability in $L_m$ mode by reducing the demand of a candidate task, a task with larger $\Delta_i$  is preferable. Meanwhile when a candidate  task $\tau_i$'s virtual deadline $D_i^{m-1}$ decreases by one, the impact on the demand of $L_{m-1}$ mode is different.

Suppose task set $\tau$  has  task $\tau_1$ and $\tau_2$ where  $T_1=D_1^2=D_1^1=10000$ and $\tau_2$ has $T_2=D_2^2=D_2^1=2$. Also, suppose  $L_2$ mode of $\tau$ is currently not schedulable but can be tuned to become schedulable if either $D_1^1$ or  $D_2^1$ decreases by one.  The impact of  $D_1^1\leftarrow D_1^1-1$ or $D_2^1\leftarrow D_2^1-1$ is different, i.e.,  $ \frac{\Delta D_1^1}{D_1^1}=0.0001$ and $\frac{\Delta D_2^1}{D_2^1}=0.5$.  If $D_1^1\leftarrow D_1^1-1$, the system demand of $L_1$ mode stays the same for any time interval length $\mathit e(<9999)$.  As a result,  the $L_{1}$ mode is easier  to be schedulable  if $D_1^1$ decreases to $9999$ compared to the case if $D_2^1$ decreases to $1$. From the above discussion, we know a task with larger $\Delta_i$ and $D_i^{m-1}$ is more likely to become a better candidate task. Therefore, among all possible candidate tasks in $\Psi_m$, we use $\Delta_i\times D_i^{m-1}$ as the main metric to choose a candidate task.

Until now, we did not consider case when all tasks have $\Delta_i=0\Rightarrow \Delta_i\times D_i^{m-1}=0$. This means that,  there does not exist a candidate task so that the total demand would decrease if $D_i^{m-1}\leftarrow D_i^{m-1}-1$. As shown in Figure~\ref{fig:dbfc3}, if $\Delta_i=0$,  it must be that  $len_i=mod(\mathit e,T_i)-(D_i^{m}-D_i^{m-1})-C_i^{m-1}>0$. The demand of $\tau_i$ would start to decrease if $D_i^{m-1}\leftarrow D_i^{m-1}-len_i-1$. Therefore, we also choose those tasks with the $\min\{\max\{0,len_i\}\}$ among those candidate tasks. Hence in Algorithm~\ref{alg:findcandidate}, we sort tasks in $\Psi_m$ with first key $\Delta_i\times D_i^{m-1}$ in ascending order and then with second key $\max\{0,len_i\}$ in descending order. 

\begin{algorithm}[h]
\caption{find\_candidate\_m($\Psi$, $\mathit e$)}
\label{alg:findcandidate}
\SetCommentSty{small}
\SetAlgoNoLine
\CommentSty{\color{blue}}
$Q\leftarrow \mbox{empty queue}$\;
\For{$\tau_i  \in \Psi$}
{
	$Q.insert(\Delta_i\times D_i^{m-1},\max\{0,len_i\},\tau_i)$\;

}
$Q.sort(\mbox{ascending},\mbox{descending})$ \tcp*[l]{\textcolor{blue}{sort tasks in $\Psi$ with first key $\Delta_i\times D_i^{m-1}$ in ascending order and then with second key $\max\{0,len_i\}$ in descending order}}
 $\tau_i \leftarrow Q.\mbox{pop()}$\tcp*[l]{\textcolor{blue}{return the corresponding task}}
\Return $\tau_i$
\end{algorithm}

\section{Evaluation}
\label{sec:eva}
In this section we evaluate the ability of the proposed virtual deadline assignment strategy and the MM dbf based test (i.e, Algorithm~\ref{alg:tunesystem}) to schedule MC task systems.  We use acceptance ratios, i.e., the fraction of schedulable task sets, as the metric to evaluate our proposed approach. 

We aim to compare with the existing work named GreedyTuning~\cite{EkYi14} based on SM dbf. From extensive experiments,  it has already been shown in ~\cite{EkYi14} that this work outperforms  existing studies (e.g., \cite{BBA12,BBG11,Ves07}) for a variety of generic real-time systems.  Therefore, through comparison with this SM dbf work, we aim to show in this section that the techniques proposed in this paper also outperform those studies, including the one based on SM dbf test. 

We consider MC sporadic tasks scheduled on a dedicated unit speed uniprocessor platform. We will study the impact of varying parameters of tasks on the acceptance ratios of these approaches: 1):~GreedyTuning (GT)~\cite{EkYi14}, 2):~SM dbf test from~\cite{EkYi14} with the improved deadline assignment strategy (i.e,~Algorithm~\ref{alg:findcandidate}) (GTI),  3):~our improved dbf-based test~(IMPT, i.e.,~Algorithm~\ref{alg:tunesystem}).


\subsection{Task Set Generation}
Suppose task set $\tau $ is a empty task set ($\tau=\emptyset$) initially. Randomly generated tasks are  added to the task set $\tau$ repeatedly until certain requirements are met. The parameters of each task is controlled by the following parameters.
\begin{itemize}
\label{set:1}
	\item $P(m)$ denotes the probability that $\tau_i$ has $L_i=m$.
	\item $C_i^1$ is drawn using an uniform distribution over $[1,10]$.
	\item $RC_m$ denotes the maximum ratio of $C_i^m/C_i^{m-1}$.
	\item $C_i^m$ is  drawn using an uniform distribution over $[C_i^{m-1},RC_m\times C_i^{m-1}]$.
	\item $T_i$ is drawn using an uniform distribution over $[C_i^{Li},200]$.
  	\item $D_i$ is drawn using an uniform distribution over $[D_i^{MIN},T_i]$  where $D_i^{MIN}=\lfloor C_i^{Li}+RD\times(T_i-C_i^{L_i}) \rfloor$, and $RD\in[0,1]$
\end{itemize}

Let $U_{\tau}=\max\limits_{m\in\{1,2,\ldots, M\}}\{\sum\limits_{L_i\geq m}\frac{C_i^m}{T_i}  \}$  denote the  utilization bound of a task set $\tau$.  For a given utilization bound, our generation procedure requires  $U_{\tau}$ to fall within the small interval between $[Ubound-\epsilon, Ubound]$ ($\epsilon=0.005$).  As long as $U_{\tau}<Ubound-\epsilon$,  a new task  will be  randomly generated and  added to $\tau$. Once $U_{\tau}$ of $\tau$ falls within the range $[Ubound-\epsilon, Ubound]$, the generation procedure for $\tau$ is considered complete.  However if $U_{\tau}$ becomes greater than $Ubound$ after a new task  $\tau_i$ is added to $\tau$, we discard the whole task set and start with a new empty task set.

\begin{figure}[h!]
\centering
\includegraphics[scale=0.37]{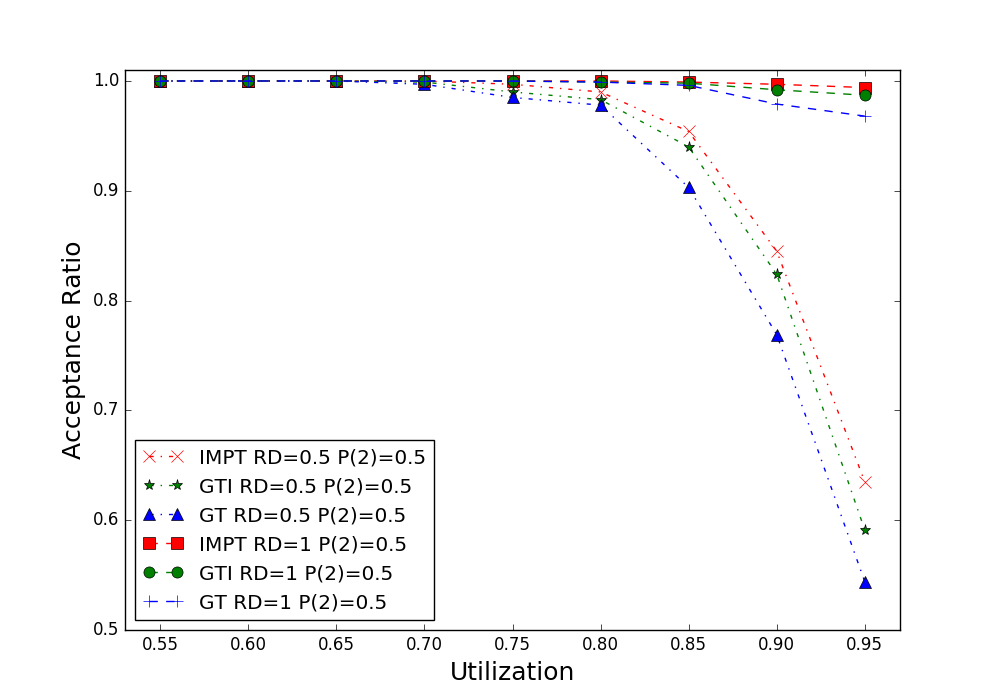}
\caption{Dual-Criticality Task Systems with $RD\in\{0.5,1\}$ and $P(2)=0.5$}
\label{fig:L2}
\end{figure}
\vspace{-20mm}

\begin{figure}[h!]
\centering
\includegraphics[scale=0.37]{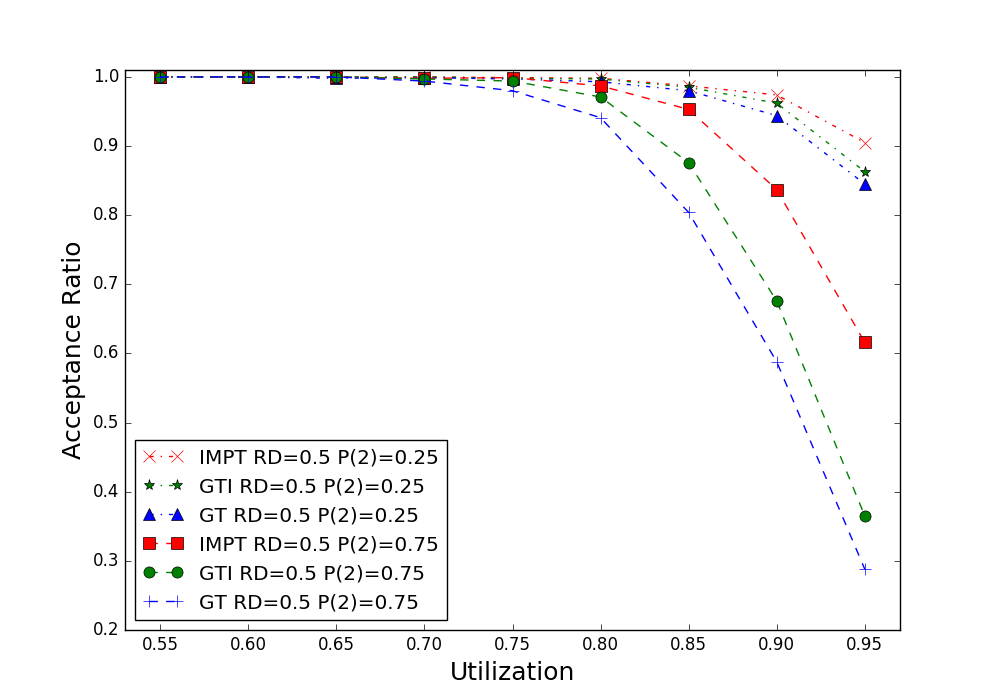}
\caption{Dual-Criticality Task Systems with $RD=0.5$ and $P(2)\in \{0.25,0.75\}$}
\label{fig:L2s}
\end{figure}





\subsection{Evaluation of Dual-Criticality Systems}

Figure~\ref{fig:L2} shows the acceptance ratio  as a function of  utilization  ($Ubound$) of task sets generated with $RC_2=3$, $P(1)=P(2)=0.5$ and $RD\in\{0.5,1\}$.   Each data is based on 1000 randomly generated task sets.   As shown in Figure~\ref{fig:L2}, our improved dbf-based schedulability test (IMPT)  strictly outperforms  GreedyTuning (GT). However the performance gap between  GT and IMPT is very small when $RD=1$, which implies GT already does quite well in scheduling dual-criticality implicit deadline task systems.


Figure~\ref{fig:L2s}  shows the acceptance ratios of task sets  generated with  $RD=0.5$, $RC_2=3$ and  $P(2)\in\{0.25,0.75\}$.  The acceptance ratios of task sets with $RD=0.5$, $RC_2=3$ and   $P(2)=0.5$ can be found in Figure~\ref{fig:L2}.  We can observe that as $P(2)$ increases from $0.25$ to $0.75$, the acceptance ratio of GT  drops quickly.  Even though the SM dbf test with the improved deadline assignment strategy (GTI) always has a higher  acceptance ratio than GT, its  acceptance ratio is much lower than IMPT when $P(2)=0.75$. One interpretation for this trend is that SM dbf test is not good at scheduling systems with a larger percentage of high critical tasks. On the other hand,  the acceptance ratio of  IMPT drops much slower,  and its acceptance ratio becomes almost two times as much as GT when $P(2)=0.75$ and $Ubound=0.95$. In fact,  the  acceptance ratio of IMPT when $P(2)=0.75$ is closer to its  acceptance ratio when $P(2)=0.5$.

\subsection{Evaluation of Multi-Level Systems}
In this section we compare the acceptance ratios of different tests for multi-criticality task systems. Figure~\ref{fig:L3I}  shows the acceptance ratios  of three level  systems with $P(1)=P(2)=P(3)=1/3$, $RC_2=RC_3=2$ and $RD\in\{0.5,1\}$.  As we can observe,  the performance of all the three approaches drops, but  IMPT very well outperforms GT  even for small utilization bounds.

\begin{figure}[h!]
\centering
\includegraphics[scale=0.37]{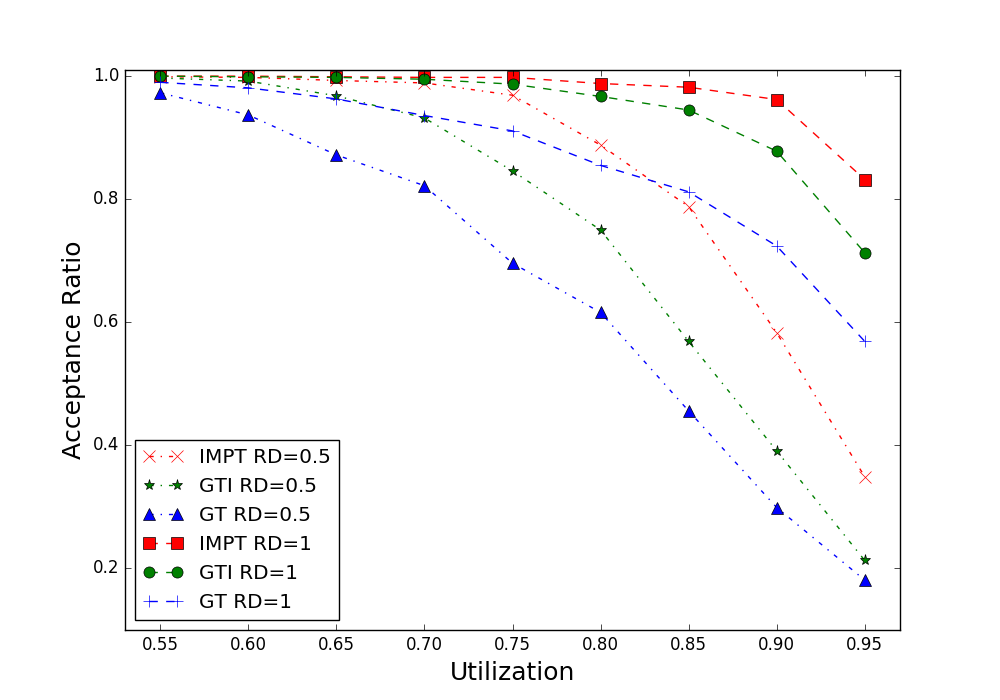}
\caption{Three Level Task Systems}
\label{fig:L3I}
\end{figure}


To study how the acceptance ratio changes as the number of criticality levels increases, we here present the weighted acceptance ratios of different approaches for each  criticality level.  Suppose $A(U)$ is the  acceptance ratio of a certain approach for  utilization $U\in[0.55,0.6,0.65,\ldots,0.95]$, then the weighted acceptance ratio is defined as
\[
\frac{\sum_{U\in\{0.55,0.6,0.65,\ldots,0.95\}}A(U)\times U }{\sum_{U\in\{0.55,0.6,0.65,\ldots,0.95\}}U}.
\]

\begin{figure}[tbh]
\centering
\subfigure[Weighted Acceptance Ratio $RD=1$]{
        \begin{minipage}[t]{0.48\textwidth}
        \includegraphics[width=\textwidth]{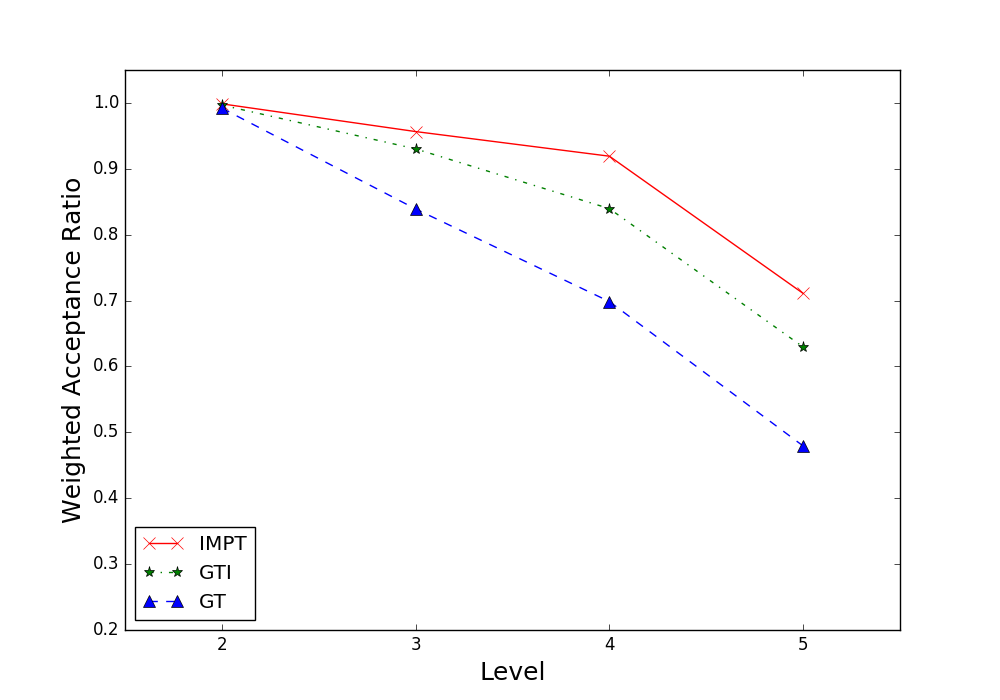}
        \end{minipage}
}
\subfigure[Weighted Acceptance Ratio $RD=0.5$]{
        \begin{minipage}[t]{0.48\textwidth}
        \includegraphics[width=\textwidth]{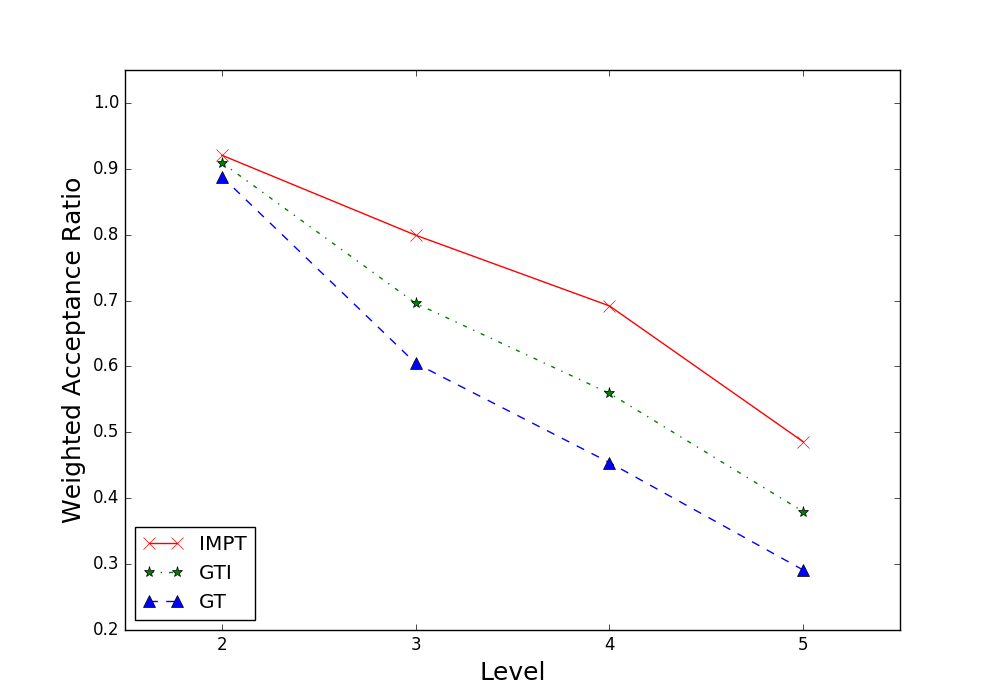}
        \end{minipage}
}
\caption{Weighted Acceptance Ratio}
\label{fig:weight}
\end{figure}




As we can observe,  the gap in weighted acceptance ratio  between  GT  and IMPT  is in fact quite small when the criticality level is equal to $2$.  However, this gap become much larger  when the criticality level becomes greater than $2$. The acceptance ratio of IMPT  also becomes lower (especially for the case when $RD=0.5$)  as the number of criticality level increases. One interpretation for this trend may be that, there is not enough space to tune the virtual deadlines since we need to set multiple virtual deadlines for each task.  However, note that,  it does not mean that the proposed approach IMPT has very poor performance, and can schedule only a small portion of generated task sets. Since there is no known exact feasibility test for MC task systems, we are unable to eliminate all the infeasible from our experiments.  Nevertheless, we can  still conclude that the proposed approach in this paper outperforms GT for a variety of generic systems.



\section{Conclusions}
\label{sec:conclusions}

We first introduced the existing single-mode demand bound functions~\cite{EkYi14}, which characterize the demand of mixed-criticality sporadic tasks. They use a pessimistic  upper bound to characterize demand of carry-over jobs by assuming that the previous criticality mode is schedulable.  As a result, the single-mode dbfs  over estimate the demand of carry-over jobs. Due to the drawback of the single-mode dbf based test, it has a severe problem  that its performance decreases significantly as the number of criticality levels increases.

To avoid the problem of single-mode dbf based schedulability test, we propose  multi-mode dbfs which consider the execution demand in the previous criticality mode to determine the remaining execution for carry-over jobs.  The proposed multi-mode dbf based test can avoid the problem of the single-mode dbf at the cost of a quadratic increase in the complexity. In practice it could be computationally expensive if we use the multi-mode dbf based test directly. Therefore we propose a novel heuristic deadline tuning algorithm which uses the multi-mode dbf as a complement to reduce the off-line computation time.  Finally we show that  our proposed approach  outperforms single-mode dbf based schedulability test~\cite{EkYi14} from experimentation. 

Often, EDF is quoted as being too unpredictable in case of overloads since it is practically impossible to predict which jobs will suffer the extra delays. This is not the case for mixed-criticality systems, because more important (or critical) tasks will be prioritized in an overload situation.

Though we use the multi-mode dbfs as a complement to reduce off-line computation time, it still takes a lot of time compared to single-mode dbf based test. As future work we plan to find a strategy to reduce the computation time of our proposed approach. One limitation of  the multi-mode dbf is that it is limited to constrained deadline MC systems. Even though it seems to be straightforward to extend it to arbitrary deadline, it can become very pessimistic because there would exist more than one carry-over job.  Therefore in the future we also plan to address this problem and extend the multi-mode dbf to arbitrary deadline mixed-criticality systems.

\appendix
\section{APPENDIX}
\subsection{Demand bound function for carry-over jobs}
\label{sec:dbfjob}
In Section~\ref{sec:GIT} we have derived the MM dbf for individual tasks as well as for the entire task system. These functions use the dbf for the carry-over jobs at mode switches $S_m-1$ and $S_m$ (i.e., jobs $J_i^A$, $J_i^B$,$J_i^C$ and $J_i^D$ defined in Section~\ref{sec:GIT}). For simplicity we assume $S_{m-1}=0$ because the dbf for carry-over jobs is independent of $S_{m-1}$.  In this section, we derive an bound of the demand of  these jobs.  In order to present the dbf, we will make use of the following conditions on the virtual deadlines of these jobs in various modes.
\begin{description}
	\item[Condition (1)] $d(J_i^{X},m-2)=r(J_i^X)+D_i^{m-2}<S_{m-1}$. 
	\item[Condition (2)] $d(J_i^{X},m-1)=r(J_i^X)+D_i^{m-1}<S_m$.
	\item[Condition (3)] $d(J_i^{X},m)=r(J_i^X)+D_i^{m}\leq \mathit e$.
	\item[Condition (4)] $d(J_i^{X},m-1)=r(J_i^X)+D_i^{m-1}\leq \mathit e$.
\end{description}
\vspace{-5mm}
\begin{figure}[htb]
\centering
\includegraphics[width=0.65\textwidth]{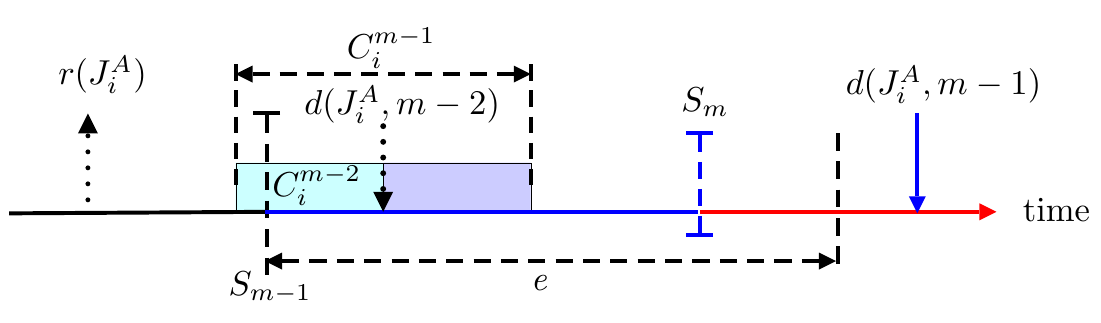} 
\caption{$J_i^{A}:\neg (1)\wedge\neg(4)$}
\label{fig:joba}
\end{figure}
$J_i^{A}$ would either only experience mode-switch at $S_{m-1}$ or mode-switches at $S_m$ and $S_{m-1}$. If $J_i^{A}$ has its virtual deadline $d(J_i^{A},m-2)<S_{m-1}$ (i.e., condition (1)),  then it must already finish before $S_{m-1}$. On the other hand if $\neg(1):d(J_i^{A},m-2)\geq S_{m-1}$ and $(4):d(J_i^{A},m-1)\leq \mathit e$, the execution demand of $J_i^{A}$ is bounded by $\min\{d(J_i^{A},m-2)-S_{m-1},C_i^{m-2}\}+C_i^{m-1}-C_i^{m-2}$. The  demand of $J_i^{A}$ after $S_{m-1}$ is also bounded by $S_m$ because $J_i^{A}$ would not execute after $S_m$ . An extreme case is when $\neg(4):r(J_i^{A})+D_i^{m-1}>\mathit e$  as shown in Figure~\ref{fig:joba}, and in this case $J_i^{A}$ generates 0 demand after $S_{m-1}$ because its deadline is out of the time interval of interest. Hence we generalize $dbf_{MM}(J_i^{A},S_m,\mathit e,m)$ as follows.
\begin{equation}
\small
\label{eqn:joba}
\begin{split}
&~~dbf_{MM}(J_i^{A},S_m,\mathit e,m)=\\
&
\begin{cases}
\min\left\{\min\left\{r(J_i^{A})+ D_i^{m-2},C_i^{m-2}\right\}+C_i^{m-1}-C_i^{m-2},S_m\right\}&\mbox{If }\neg(1)\wedge (4)\\
  0 &\mbox{Otherwise}
\end{cases}
\end{split}
\end{equation}

$J_i^{B}$ is similar to $J_i^{A}$ except that we can ignore condition (1) because it can never be true. To maximize $J_i^{B}$'s  demand, we assume it executes continuously from $r(J_i^{B})$. Again if its deadline $d(J_i^{B},m-1)>\mathit e$, it would generate 0 demand. Hence, we have
\begin{equation}
\label{eqn:jobb}
\begin{split}
dbf_{MM}(J_i^{B},S_m,\mathit e,m)=
   \begin{cases}
\min\left\{C_i^{m-1},S_m-r(J_i^{B})\right\}&\mbox{If } (4)\\
0&\mbox{Otherwise}
    \end{cases}
\end{split}
\end{equation}

Similar to $J_i^{A}$, if $(1):d(J_i^{C},m-2)<S_{m-1}=0\vee \neg(4):d(J_i^{C},m-1)>\mathit e$ , $J_i^{C}$ generates 0 demand during $[0,\mathit e)$.  If $\neg(1)\wedge (2)$, $J_i^{C}$ would already finish by $S_m$, and hence would generate demand equal to $C_i^{m-1}-C_i^{m-2}+\min\{d(J_i^{C},m-2)-S_{m-1},C_i^{m-2}\}$.  If  $\neg(1)\wedge \neg(2)\wedge (3)$, $J_i^{C}$ would generate $C_i^{m}-C_i^{m-2}+\min\{d(J_i^{C},m-2)-S_{m-1},C_i^{m-2}\}$ demand as shown in Figure~\ref{fig:jobc}. If  $\neg(1)\wedge \neg(2)\wedge\neg(3)\wedge (4)$, $J_i^{C}$ would not generate demand after $S_m$ because its deadline $d(J_i^{C},m)$ is out of the interval of interest. In this case its demand is bounded by $S_m$. Hence, we have
\begin{figure}[thb]
\centering
\includegraphics[scale=0.85]{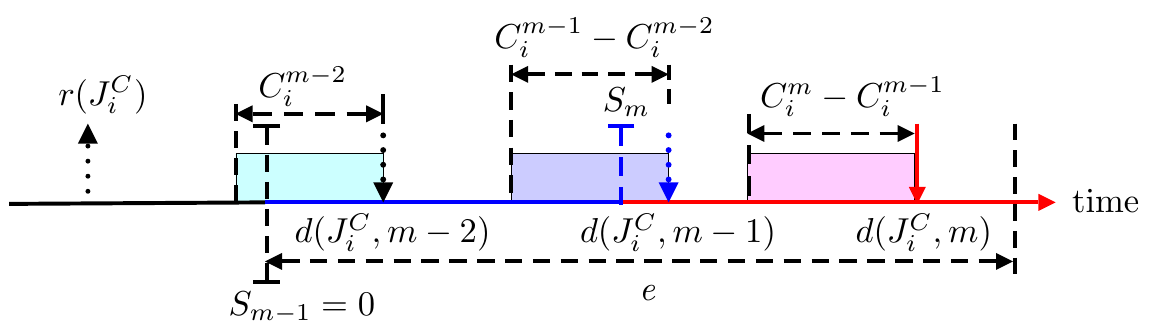} 
\caption{$J_i^{C}:\neg(1)\wedge\neg(2)\wedge(3)$}
\label{fig:jobc}
\end{figure}

\begin{equation}
\small
\label{eqn:dbffi}
\begin{split}
&~~dbf_{MM}(J_i^{C},S_m,\mathit e,m)=\\&
   \begin{cases}
	 0& \mbox{If } (1)\vee\neg(4)\\
	 C_i^{m-1}-C_i^{m-2}+\min\left\{d(J_i^{C},m-2),C_i^{m-2}\right\}&\mbox{If }
	 \neg(1)\wedge (2)\\
	 C_i^{m}-C_i^{m-2}+\min\left\{d(J_i^{C},m-2),C_i^{m-2}\right\}&\mbox{If }
	 \neg(1)\wedge \neg(2)\\&\wedge (3)\\
	\min\left\{S_m,C_i^{m-1}-C_i^{m-2}+\min\left\{d(J_i^{C},m-2),C_i^{m-2}\right\}\right\} &\mbox{If} \neg(1)\wedge \neg(2)\\&\wedge \neg(3) \wedge (4)
   \end{cases}
\end{split}
\end{equation}



$J_i^{D}$ behaves similar to $J_i^{C}$ except we can ignore condition (1) because it can never be true. Hence, we have
\begin{equation}
\small
\label{eqn:dbffi}
\begin{split}
dbf_{MM}(J_i^{D},S_m,\mathit e,m)=
   \begin{cases}
	 0& \mbox{If }\neg (4)\\
	 C_i^{m-1}&\mbox{If } (2)\\
	 C_i^{m}&\mbox{If }
	 \neg(2)\wedge (3)\\
	\min\{C_i^{m-1},S_m-r(J_i^{D})\}  &\mbox{If}  \neg(2)\wedge \neg(3) \wedge (4)
   \end{cases}
\end{split}
\end{equation}

\subsection{Proofs}
\label{sec:proof}
\ae{
\begin{proof}[for Lemma~\ref{lem:pat2}]
\textbf{(C1):} If $\mathit e\leq D_i^{m-1}$, obviously the demand of $\tau_i$ maximizes when  $r(J_i^{A})=mod(\mathit e- D_i^{m-1},T_i)-T_i$,~i.e., $d(J_i^{A},m-1)=\mathit e$ as shown in Figure~\ref{fig:c1}.
\begin{figure}[h!]
\centering
\includegraphics[scale=0.75]{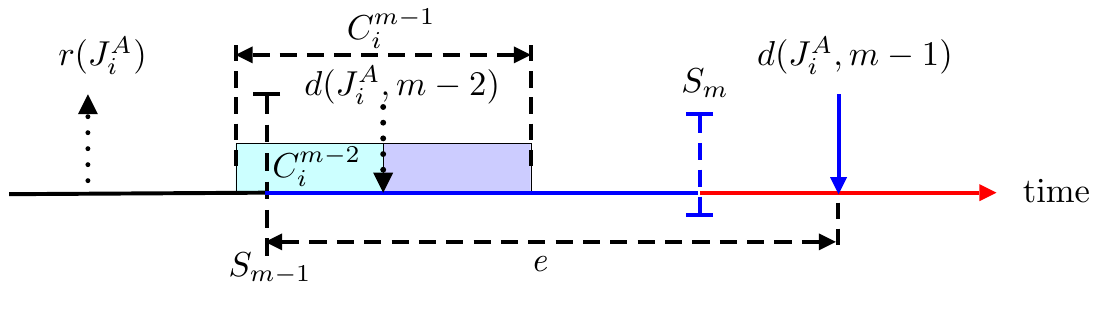} 
\caption{Case when  $\mathit e\leq D_i^{m-1}$}
\label{fig:c1}
\end{figure}

\textbf{(C2):} If $(S_m\leq C_i^{m-1}\wedge \mathit e> D_i^{m-1})$, we can find the demand of $\tau_i$ maximizes when $r(J_i^{A})=-D_i^{m-2}+C_i^{m-2}$, which is is bounded by $S_m$ as shown in Figure~\ref{fig:c2}.
\begin{figure}[h!]
\centering
\includegraphics[scale=0.7]{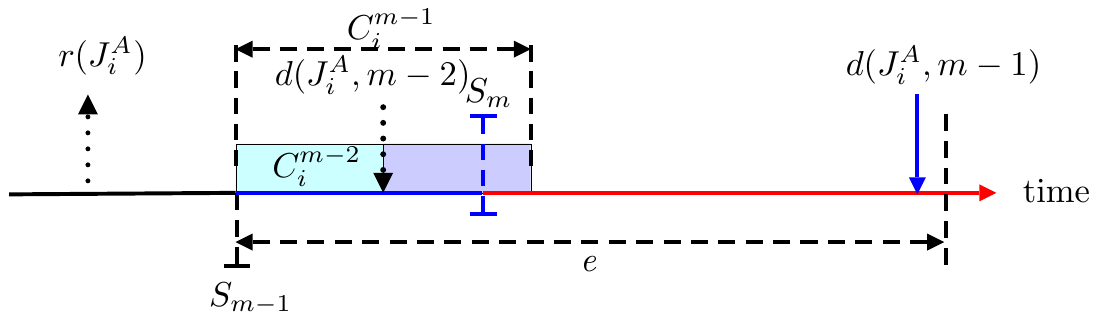} 
\caption{Case when  $(S_m\leq C_i^{m-1}\wedge \mathit e> D_i^{m-1})$}
\label{fig:c2}
\end{figure}
Now we can exclude the above two cases, and we only need to consider the case when $\mathit e> D_i^{m-1} \wedge S_m> C_i^{m-1}$.

\textbf{(C3):} If  $r(J_i^{B})+D_i^{m-1} \leq \mathit e$ when $r(J_i^{A})=-D_i^{m-2}+C_i^{m-2}$ as shown in Figure~\ref{fig:c3}.
\begin{figure}[h!]
\centering
\includegraphics[scale=0.7]{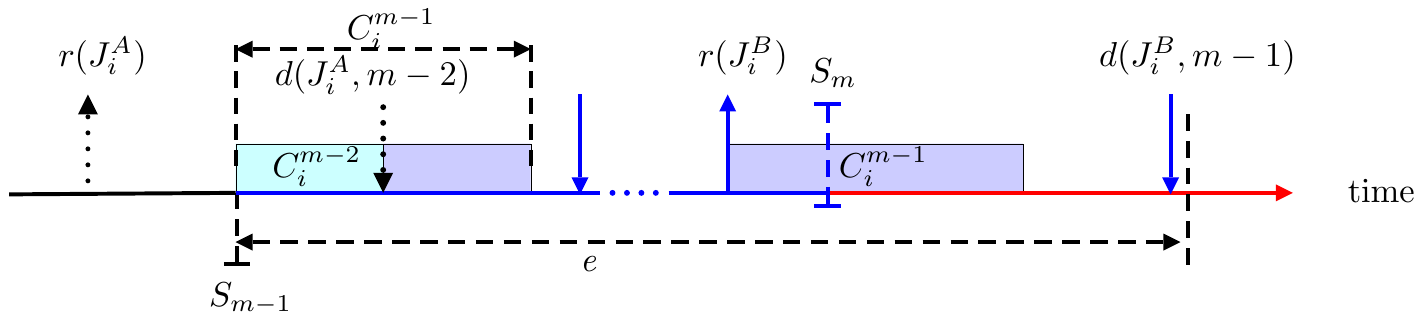} 
\caption{Case C3}
\label{fig:c3}
\end{figure}
As we shift the release pattern left, i.e., $r(J_i^{A})=-D_i^{m-2}+C_i^{m-2}+x|x<0$, the demand change of $J_i^A$, $\Delta_1(x)$, is of the following form.
\[
\begin{split}
 	\Delta_1(x)=
   \begin{cases}
	x &\mbox{ If } x \in [-C_i^{m-2},0)\\
	-C_i^{m-1} & \mbox{ If } x \in [-C_i^{m-2}+D_i^{m-2}-T_i, -C_i^{m-2})
   \end{cases}
\end{split}
\]
Meanwhile the demand of $J_i^B$ will at most increase  linearly, and hence the total demand of $\tau_i$ will stay the same or decrease as we shift the pattern left. As we shift the release pattern right,  the demand of $\tau_i$ will only decrease or stay the same. Therefore in this case the demand of $\tau_i$  maximizes when $r(J_i^{A})=-D_i^{m-2}+C_i^{m-2}$. 

\textbf{(C4):}  If  $r(J_i^{B})+D_i^{m-1} > \mathit e$ when $r(J_i^{A})= -D_i^{m-2}+C_i^{m-2} $ as shown in Figure~\ref{fig:c4}.
\begin{figure}[h!]
\centering
\includegraphics[scale=0.7]{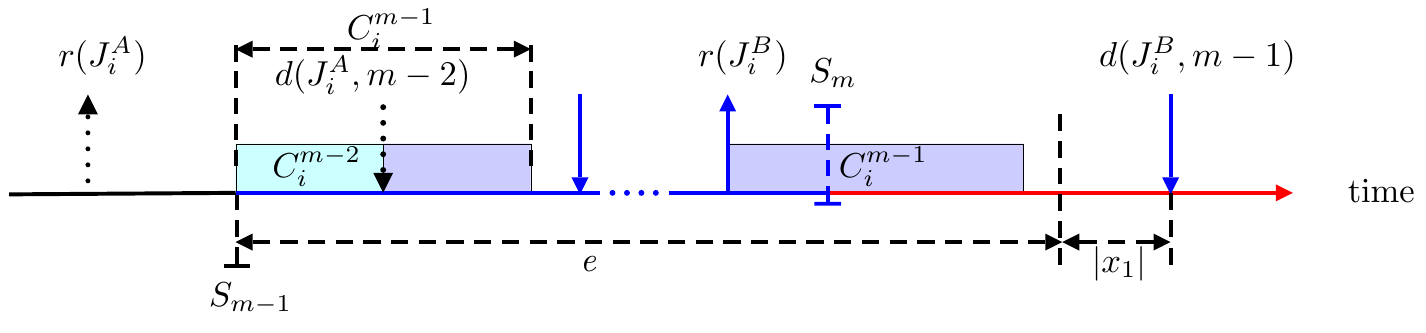} 
\caption{Case C4}
\label{fig:c4}
\end{figure}
As we shift the release pattern right, then  obviously the demand of $\tau_i$ would stay the same or decrease. On the other hand as we shift the release pattern left, i.e., $r(J_i^{A})= -D_i^{m-2}+C_i^{m-2}+x|x<0$,  demand of $J_i^B$ will first increase from $0$ to $y_1=dbf_{MM}(J_i^{B},S_m,\mathit e,m)|r(J_i^{B})+D_i^{m-1}=\mathit e$ (assuming at this time $x=x_1$), and then increases  linearly to $C_i^{m-1}$.
When $x> x_1$, the demand of $J_i^A$ will decrease but the demand of other jobs (including $J_i^B$) stay the same. When $x=x_1$, the change demand of $J_i^A$ is equal to $-C_i^{m-1}$ if $x_1<-C_i^{m-2}$, or $x_1$ if $x_1\geq -C_i^{m-2}$.  If  $x_1<-C_i^{m-2}$ and $y_1+x_1\geq 0$, then the demand of $\tau_i$ maximized at this time ($r(J_i^{A})=mod(\mathit e- D_i^{m-1},T_i)-T_i$) because as we further shift the pattern left, the the demand of $\tau_i$ would only decrease or stay the same.    Otherwise if $y_1+x_1<0$, then  total demand of $\tau_i$  maximizes demand when $r(J_i^{A})=-D_i^{m-2}+C_i^{m-2}$ because as we further shift the pattern left, the total demand would only stay the same or decrease.
  
In sum  the demand of $\tau_i|L_i= m-1$  during $[0,\mathit e)$ is maximized if $r(J_i^{A})=-D_i^{m-2}+C_i^{m-2}$ or $r(J_i^{A})=mod(\mathit e- D_i^{m-1},T_i)-T_i$.

\end{proof}

\begin{proof}[for Lemma~\ref{lem:pat3}]

\textbf{(C1):} If  $\mathit e-S_m>D_i^m$,  the demand of $\tau_i|L_i\geq m$  during $[0,\mathit e)$ is maximized when $r(J_i^{C})=mod(\mathit e- D_i^{m},T_i))-T_i$ (the last job released before $\mathit e$, $J_i^L$, has $d(J_i^L,m)=\mathit e$)  as shown in Figure~\ref{fig:c5}.
\begin{figure}[h!]
\centering
\includegraphics[scale=0.65]{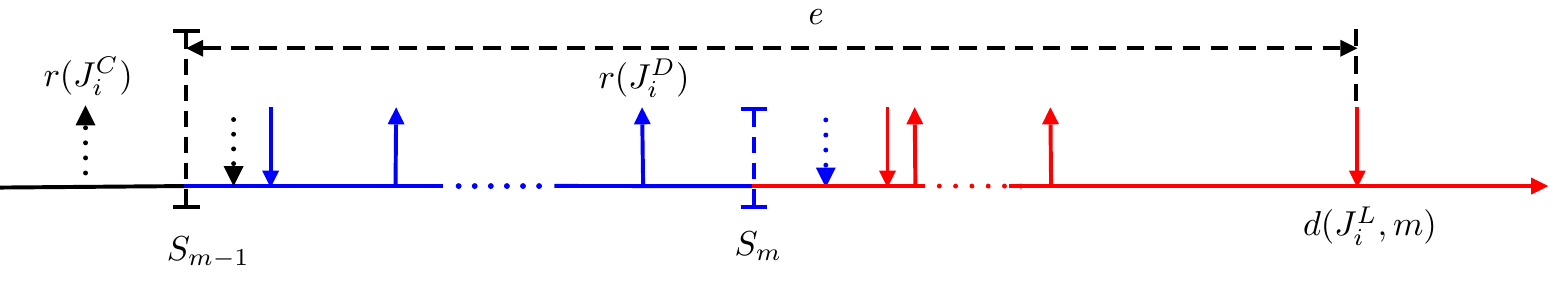} 
\caption{Case when $\mathit e-S_m>D_i^m$}
\label{fig:c5}
\end{figure}
This is because  as we shift the release pattern right, the demand of $J_i^L$ would decrease from $C_i^m$ to $0$ while the increase in demand of other jobs including $J_i^{C}$ and $J_i^{D}$  is bounded by $C_i^m$. On the other hand as we shift the release pattern left, the total demand of all jobs would decrease or stay the same.

\textbf{(C2):} If  $\mathit e-S_m<D_i^m-D_i^{m-1}$, no job of $\tau_i$ could execute more than $C_i^{m-1}$ (no job will execute after $S_m$). Thus $\tau_i$ behaves as a task with $L_i=m-1$, and  from Lemma~\ref{lem:pat2} we know the demand of $\tau_i$ is maximized  when $r(J_i^{C})\in\{-D_i^{m-2}+C_i^{m-2}, mod(\mathit e- D_i^{m-1},T_i)-T_i\}$.

\textbf{(C3):} If $D_i^m-D_i^{m-1} \leq \mathit e-S_m\leq D_i^m$,  at most one job (either $J_i^{C}$ or  $J_i^{D}$ ) could generate execution demand as much as $C_i^m$ as shown in Figure~\ref{fig:c6}.
\begin{figure}[h!]
\centering
\includegraphics[scale=0.7]{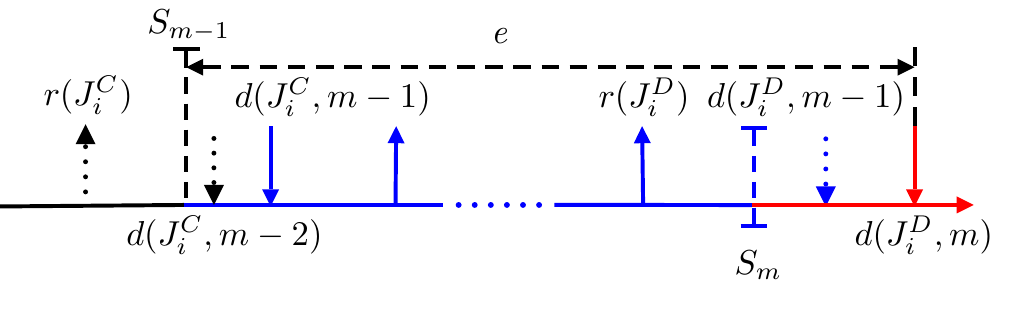} 
\caption{Case when $D_i^m-D_i^{m-1} \leq \mathit e-S_m\leq D_i^m$}
\label{fig:c6}
\end{figure} 
Thus suppose initially $r(J_i^C)=mod(\mathit e- D_i^{m},T_i))-T_i$, i.e., the scenario of Figure~\ref{fig:c6}. If we shift the release pattern left, the demand of $\tau_i$ would only decrease or stay the same. 
On the other hand as we shift the release pattern right, no job would execute after $S_m$, and hence $\tau_i$ behaves as a task with $L_i=m-1$. Therefore from Lemma~\ref{lem:pat2} we know the demand of $\tau_i$ is maximized  when $r(J_i^{C})\in\{-D_i^{m-2}+C_i^{m-2}, mod(\mathit e- D_i^{m-1},T_i)-T_i\}$.

In sum  the demand of $\tau_i|L_i\geq m$  during $[0,\mathit e)$ is maximized if $r(J_i^{C})=-D_i^{m-2}+C_i^{m-2}$ or $r(J_i^{C})=mod(\mathit e- D_i^{m-1},T_i)-T_i$ or $r(J_i^{C})=mod(\mathit e- D_i^{m},T_i)-T_i$.
\end{proof}

}

\subsection{Upper Bound of time interval length}
\label{sec:emax}
For task $\tau_i$ with $L_i=m-1$, we can observe that its demand during $[0,S_m)$ is upper bounded by  $(\frac{S_m-2\times C_i^{m-1}}{T_i}+2)\times C_i^{m-1}$. For task $\tau_i$ with $L_i\geq m$, its demand during $[0,\mathit e)$ is upper bounded by 
\begin{align*}
\small
\begin{split}
	 &\underbrace{C_i^{m-1}}_{\geq dbf_{MM}(J_i^{C},S_m,\mathit e,m)}+ \underbrace{\frac{S_m-C_i^{m-1}}{T_i}\times C_i^{m-1}}_{\geq n_{m-1}\times C_i^{m-1}}+\underbrace{C_i^m}_{\geq dbf_{MM}(J_i^{D},S_m,\mathit e,m)}+ \underbrace{\frac{\mathit e-S_m-D_i^m+T_i}{T_i}\times C_i^m}_{\geq n_m\times C_i^m}\\
	 &= C_i^{m-1} \times \frac{S_m-C_i^{m-1}+T_i}{T_i}+ \frac{\mathit e-S_m-D_i^m+2\times T_i}{T_i}\times C_i^m
\end{split}
\end{align*}
Suppose $dbf_{MM}(\tau,S_m,\mathit{e},m)>\mathit e$, then it must be that 
{\footnotesize\begin{align*}
&\sum\limits_{L_i\geq m} \left( C_i^{m-1}\! \times \!\frac{S_m\!-C_i^{m-1}\!+T_i}{T_i}
\!+\! \frac{\mathit e\!-\!S_m-\!D_i^m+\!2\times T_i}{T_i}\times C_i^m\right)
\!\!+\!\!\!\!\sum\limits_{L_i=m-1} \left (\frac{S_m+2\times(T_i- C_i^{m-1})}{T_i}\right)\!\times\! C_i^{m-1}\\
&=S_m\times \underbrace{\left(\sum\limits_{L_i=m-1}\frac{C_i^{m-1}}{T_i}+\sum\limits_{L_i\geq m}\frac{C_i^{m-1}-C_i^m}{T_i}    \right)}_{\mbox{Exp A} } + \underbrace{\sum\limits_{L_i=m-1}\frac{T_i-C_i^{m-1}}{T_i}\times 2\times C_i^{m-1}}_{\mbox{Exp B}}\\
&+\underbrace{\left(\sum\limits_{L_i\geq m}\frac{ 2\times T_i-D_i^m}{T_i} \times C_i^m+ \frac{T_i-C_i^{m-1}}{T_i}\times C_i^{m-1}\right)}_{\mbox{Exp C}}+ 
\sum\limits_{L_i\geq m}\frac{C_i^m}{T_i}\times  \mathit e>\mathit e\\
\end{align*}}
If $\mbox{Exp A}>0$, the total demand is maximized if $S_m=\mathit e$, and else if $\mbox{Exp A}\leq 0$, the total demand is maximized if $S_m=0$. Therefore the value of $\mathit e$ is bounded by 
$
	(\mbox{Exp B+ Exp C})/({1-\sum\limits_{L_i\geq m}\frac{C_i^m}{T_i}})$
or 
$
	(\mbox{Exp B+ Exp C})/({1-\sum\limits_{L_i\geq m}\frac{C_i^m}{T_i}-\mbox{Exp A}}) 	$


\bibliographystyle{ACM-Reference-Format-Journals}
\bibliography{all}








\end{document}